\newtheorem{proposition}{Proposition}
\newtheorem{lemma}{Lemma}
\begin{document}

%%% Start of article front matter
\begin{frontmatter}

\begin{fmbox}
\dochead{Research}

%%%%%%%%%%%%%%%%%%%%%%%%%%%%%%%%%%%%%%%%%%%%%%
%%                                          %%
%% Enter the title of your article here     %%
%%                                          %%
%%%%%%%%%%%%%%%%%%%%%%%%%%%%%%%%%%%%%%%%%%%%%%

\title{Statistical QoS provisioning for MTC Networks under Finite Blocklength}

%%%%%%%%%%%%%%%%%%%%%%%%%%%%%%%%%%%%%%%%%%%%%%
%%                                          %%
%% Enter the authors here                   %%
%%                                          %%
%% Specify information, if available,       %%
%% in the form:                             %%
%%   <key>={<id1>,<id2>}                    %%
%%   <key>=                                 %%
%% Comment or delete the keys which are     %%
%% not used. Repeat \author command as much %%
%% as required.                             %%
%%                                          %%
%%%%%%%%%%%%%%%%%%%%%%%%%%%%%%%%%%%%%%%%%%%%%%

\author[
addressref={aff1},                   % id's of addresses, e.g. {aff1,aff2}
corref={aff1},                       % id of corresponding address, if any
%noteref={n1},                        % id's of article notes, if any
email={mohammad.shehab@oulu.fi}   % email address
]{\inits{MS}\fnm{Mohammad} \snm{Shehab}}
\author[
addressref={aff2},
email={endrit.dosti@mail.utoronto.ca}
]{\inits{ED}\fnm{Endrit} \snm{Dosti}}
\author[
addressref={aff1},
email={hirley.alves@oulu.fi}
]{\inits{HA}\fnm{Hirley} \snm{Alves}}
\author[
addressref={aff1},
email={matti.latva-aho@oulu.fi}
]{\inits{ML}\fnm{Matti} \snm{Latva-aho}}

%%%%%%%%%%%%%%%%%%%%%%%%%%%%%%%%%%%%%%%%%%%%%%
%%                                          %%
%% Enter the authors' addresses here        %%
%%                                          %%
%% Repeat \address commands as much as      %%
%% required.                                %%
%%                                          %%
%%%%%%%%%%%%%%%%%%%%%%%%%%%%%%%%%%%%%%%%%%%%%%

\address[id=aff1]{%                           % unique id
	\orgname{Centre for Wireless Communications (CWC), University of Oulu, Finland} % university, etc
	% \street{},                     %
	%\postcode{}                                % post or zip code
	%\city{London},                              % city
	%\cny{UK}                                    % country
}
\address[id=aff2]{%
	\orgname{Department of Electrical and Computer Engineering, University of Toronto, Canada}
	%  \street{D\"{u}sternbrooker Weg 20},
	% \postcode{24105}
	% \city{Kiel},
	%\cny{Germany}
}

%%%%%%%%%%%%%%%%%%%%%%%%%%%%%%%%%%%%%%%%%%%%%%
%%                                          %%
%% Enter short notes here                   %%
%%                                          %%
%% Short notes will be after addresses      %%
%% on first page.                           %%
%%                                          %%
%%%%%%%%%%%%%%%%%%%%%%%%%%%%%%%%%%%%%%%%%%%%%%

\begin{artnotes}
%\note{Sample of title note}     % note to the article
% note, connected to author
\end{artnotes}

\end{fmbox}% comment this for two column layout

%%%%%%%%%%%%%%%%%%%%%%%%%%%%%%%%%%%%%%%%%%%%%%
%%                                          %%
%% The Abstract begins here                 %%
%%                                          %%
%% Please refer to the Instructions for     %%
%% authors on http://www.biomedcentral.com  %%
%% and include the section headings         %%
%% accordingly for your article type.       %%
%%                                          %%
%%%%%%%%%%%%%%%%%%%%%%%%%%%%%%%%%%%%%%%%%%%%%%

\begin{abstractbox}
\begin{abstract}
This paper analyzes the effective capacity of delay constrained machine type communication (MTC) networks operating in the finite blocklength regime. First, we derive a closed-form mathematical approximation for the effective capacity in quasi-static Rayleigh fading channels. We characterize the optimum error probability to maximize the concave effective capacity function with reliability constraint and study the effect of signal-to-interference-plus-noise ratio (SINR) variations for different delay constraints. The trade off between reliability and effective capacity maximization reveals that we can achieve higher reliability with limited sacrifice in effective capacity specially when the number of machines is small. Our analysis reveals that SINR variations have less impact on effective capacity for strict delay constrained networks. We present an exemplary scenario for massive MTC access to analyze the interference effect proposing three methods to restore the effective capacity for a certain node which are power control, graceful degradation of delay constraint and joint compensation. Joint compensation combines both power control and graceful degradation of delay constraint, where we perform maximization of an objective function whose parameters are determined according to delay and SINR priorities. Our results show that networks with stringent delay constraints favor power controlled compensation and compensation is generally performed at higher costs for shorter packets.
\end{abstract}

%%%%%%%%%%%%%%%%%%%%%%%%%%%%%%%%%%%%%%%%%%%%%%
%%                                          %%
%% The keywords begin here                  %%
%%                                          %%
%% Put each keyword in separate \kwd{}.     %%
%%                                          %%
%%%%%%%%%%%%%%%%%%%%%%%%%%%%%%%%%%%%%%%%%%%%%%

\begin{keyword}
	\kwd{Effective capacity}
	\kwd{Machine type communication}
	\kwd{Finite blocklength}
	\kwd{Ultra reliable communication}
\end{keyword}

% MSC classifications codes, if any
%\begin{keyword}[class=AMS]
%\kwd[Primary ]{}
%\kwd{}
%\kwd[; secondary ]{}
%\end{keyword}

\end{abstractbox}
%
%\end{fmbox}% uncomment this for twcolumn layout

\end{frontmatter}

%%%%%%%%%%%%%%%%%%%%%%%%%%%%%%%%%%%%%%%%%%%%%%
%%                                          %%
%% The Main Body begins here                %%
%%                                          %%
%% Please refer to the instructions for     %%
%% authors on:                              %%
%% http://www.biomedcentral.com/info/authors%%
%% and include the section headings         %%
%% accordingly for your article type.       %%
%%                                          %%
%% See the Results and Discussion section   %%
%% for details on how to create sub-sections%%
%%                                          %%
%% use \cite{...} to cite references        %%
%%  \cite{koon} and                         %%
%%  \cite{oreg,khar,zvai,xjon,schn,pond}    %%
%%  \nocite{smith,marg,hunn,advi,koha,mouse}%%
%%                                          %%
%%%%%%%%%%%%%%%%%%%%%%%%%%%%%%%%%%%%%%%%%%%%%%

%%%%%%%%%%%%%%%%%%%%%%%%% start of article main body
% <put your article body there>

%%%%%%%%%%%%%%%%
%% Background %%
%%
\section{Introduction}\label{introduction}
Modern communication systems are becoming an indispensable part of our lives. Driven by the demands of users for extra services, the fifth generation of mobile communication (5G) is expected to introduce new features such as \textit{Ultra Reliable Low Latency Communications (URLLC)} and \textit{massive Machine Type Communication (m-MTC)} \cite {paper1,Dosti,paper3,NokiacMTC2016,paper10,eucnc2}. These features may serve many yet unforeseen applications to enable the Internet of Things (IoT). IoT aims at bringing connectivity to anything that can benefit from internet connection \cite{ericsson}. URLLC has emerged to provide solutions for reliable and low latency transmissions in wireless systems. The design of URLLC systems imposes strict quality of service (QoS) constraints to fulfill very low latency in the order of milliseconds with expected reliability of higher than 99.9$\%$ \cite{paper1,NokiacMTC2016}. In \cite{latency}, Schulz et al. discussed the reliability requirements for different IoT applications. According to their study, latency bounds range from 1 ms in factory automation to 100 ms in road safety. In addition, the packet loss rate constraints range from $10^{-9}$ in printing machines to $10^{-3}$ for traffic efficiency. Such requirements are far more stringent than the ones in the current long term evolution (LTE) standards \cite{Johan}.

The m-MTC refers to networks that can support a variety of connected smart devices at the same time with the same base station (BS). It obligates a certain level of connectivity to a machine via ultra reliable communication (URC) over relatively long term ($>$10 ms) \cite{paper1}. The number of connected devices is expected to cross the 28 billion border by 2021, where a single macro-cell may need to uphold 10,000 or more devices in the future \cite{Hamouda,MTC2}. Moreover, the traffic behavior of MTC is quite different from the HTC (Human type communication), where \cite{mtraffic-issue}:
\begin{itemize}
	\item	MTC is coordinated (i.e, there are simultaneous access attempts from many machine reacting to the same events), while HTC is uncoordinated.
	\item	MTC uses short as well as small number of packets.
	\item	MTC traffic is real as well as non-real time, periodic and event driven.
	\item	MTC QoS requirement is different from HTC (i.e. different reliability and latency requirements).
\end{itemize}

In this context, MTC has gained an increasing interest in recent years via employing new multiple radio access technologies and efficient utilization of spectrum resources to improve reliability and robustness \cite{Orsino2017,eurasip1,eurasip2}. Another research topic that has gathered much attention is cooperative transmission in MTC, where in  \cite{MTC2} the authors proposed a location-based cooperative strategy to reduce the error outage probability, but without study of the latency aspects.

Traditional communication systems are based on Shannon theoretic models and utilize metrics such as channel capacity or ergodic capacity \cite{paper11}. Unlike classical systems, URLLC networks are designed to communicate on short packets in order to satisfy extremely low latency in real time applications and emerging technologies such as e-health, industrial automation, and smart grids whenever data sizes are reasonably small such as sensor readings or alarm notifications. In the finite blocklength regime, the length of metadata is of comparable size with the length of data. Such demands stimulated a revolutionary trend in information theory studying communication at finite blocklength (FB) \cite {paper2,paper5,paper14}. In this context, conventional metrics (e.g. channel capacity or ergodic capacity) become highly suboptimal \cite{paper2}. For this reason, the maximum achievable rate for quasi-static fading channels was characterized in \cite{paper14} as a function of blocklength and error probability $\epsilon$. In \cite{Johan}, the authors analyzed the effect of using smaller resource blocks on error probability bounds in OFDM. The effect of relaying of blocklength-limited packets was studied and compared to direct transmission in \cite{paper12}, \cite{paper13} where the authors concluded that relaying is more efficient than direct transmission in the FB regime specially with average channel state information (CSI). Furthermore, the authors of \cite {paper3} introduced a per-node throughput model for additive white Gaussian noise (AWGN) and quasi-static collision channels. Therein, average delay is considered and interference is treated as AWGN. 

To model the delay requirements in URLLC and MTC networks, we resort to the effective capacity (EC) metric which was introduced in \cite{paper6}. It indicates the maximum possible arrival rate that can be supported by a network with a target delay constraint. In \cite{paper5}, the authors considered quasi-static Rayleigh fading channels and introduced a statistical model for a single node effective rate in bits per channel use (bpcu) for a certain error probability and delay exponent which reflects the latency requirement. However, throughout the paper, a closed form expression for the EC was not provided. Exploiting the EC theory, the authors of \cite{Nokia2} characterized the latency-throughput trade-off for cellular networks .In \cite{paper9}, Musavian et al. analyzed the EC maximization of secondary node with some interference power constraints for primary node in a cognitive radio environment with interference constraints. Three types of constraints were imposed namely average interference power, peak interference power and interference power outage. The fundamental trade-off between EC and consumed power was studied in \cite{paper8} where the authors suggested an algorithm to maximize the EC subject to power constraint for a single node scenario. In \cite{eucnc}, we studied the per-node EC in MTC networks operating in quasi-static Rayleigh fading proposing three methods to alleviate interference namely power control, graceful degradation of delay constraint and the joint method. To the best of our knowledge, EC for FB packets transmission in multi-node MTC scenario has not been investigated until part of the work in this journal was presented in \cite{eucnc}, which will be depicted here with extra details.

Based on its intuition, the EC theory provides a mathematical framework to study the interplay among transmit power, interference, delay, and the achievable rate for different wireless channels. In this paper, we derive a mathematical expression for EC in quasi-static Rayleigh fading for delay constrained networks. Our results depict that a system can achieve higher reliability with a negligible sacrifice in its EC. We consider dense MTC networks and characterize the effect of interference on their EC. We propose three methods to allow a certain node maintain its EC which are: \textit{i}) Power control; \textit{ii}) graceful degradation of delay constraint; and \textit{iii}) joint model. Power control depends on increasing the power of a certain node to recover its EC which in turn degrades the SINR of other nodes. Our analysis proves that SINR variations have limited effect on EC in networks with stringent delay limits. Hence, the side effect of power control is worse for less stringent delay constraints and vice versa. We illustrate the trade off between power control and graceful degradation of delay constraint. Furthermore, we introduce a joint model which combines both of them. The operational point to determine the amount of compensation performed by each of the two methods in the joint model is determined by maximization of an objective function leveraging the network performance. 

The motivation beyond this paper is to provide a solid understanding of the trade off between power, delay, and reliability in MTC networks in the finite blocklength regime. Our objective is to pave the road for utilizing short packets in 5G and machine type networks. Extra plots that were not present in \cite{eucnc} are illustrated to provide the reader with full understanding of the objective function in joint compensation and the compensation process itself. Moreover, we extend the analysis in \cite{eucnc} by solving the optimization problem to obtain the optimum error probability which maximizes the EC in the ultra reliable region. We also characterize the trade off between reliability and EC which shows that we can obtain a huge gain in reliability in return for a slight reduction in EC.

The rest of the paper is organized as follows: in  Section \ref {system model}, we introduce the system model and define some concepts such as communication at finite blocklength and EC. A closed form for the EC in quasi-static Rayleigh fading is derived in Section \ref{EC_FB}, where we also show the effect of interference on the per-node EC in multi-node MTC networks. Next, Section \ref{EC_UR} depicts the optimization problem to maximize the EC in the UR region. We present the interference alleviation methods and the trade off between them in Section \ref{multinode}. Finally, Section \ref{conclusion} concludes the paper. Table 1 includes the important abbreviations and symbols that will appear throughout the paper.

\begin{table}[h!]
	\caption{List of abbreviations and symbols.}
	\begin{tabular}{cccc}
		\hline
		bpcu & bit per channel use\\
		EC & Effective Capacity \\
		max & maximize \\
		PDF & Probability Density Function \\
		QoS & Quality Of Service \\
		SINR & Signal-to-Interference-plus-Noise Ratio \\
		s.t & subject to \\
		URC & Ultra Reliable Communication  \vspace{7mm} \\

		$C(\rho|h|^2)$ & Shannon capacity \\
		$D_{max}$ & maximum delay \\
		$\mathbb{E}[ \ ]$ & expectation of \\
		$EC$ & effective capacity \\
		$EC_{max}$ & maximum effective capacity \\
		$\mathfrak{L(\epsilon,\lambda)}$ & Lagrangian function \\
		$N$ & number of nodes \\
		$Pr()$ & probability of \\
		$P_{out_delay}$ & delay outage probability \\
		$Q(x)$ & Gaussian Q-function \\
		$Q^{-1}(x)$ & inverse Gaussian Q-function \\
		$T_f$ & blocklength \\
		$V(\rho|h|^2)$ & channel dispersion \vspace{7mm} \\

		$e$ & exponential Euler's number \\
		$|h|^2$ & fading coefficient \\
		$\ln$ & natural logarithm to the base $e$ \\
		$\log_2$ & logarithm to the base 2 \\
		$r$ & normalized achievable rate \\
		$\mathbf {w}$ & additive while Gaussian noise vector \\
		$\mathbf {x}_n$ & transmitted signal vector of node n \\
		$\mathbf {y}_n$ & received signal vector of node n \\
		$z$ & fading random variable \vspace{7mm} \\
		
		$\alpha$ & collision loss factor \\ 
		$\alpha_c$ & compensation loss factor \\
		$\alpha_{c_o}$ & operational point of compensation loss factor \\
		$\alpha_t$ & total loss \\
		$\gamma_c$ & compensation gain \\
		$\theta$ & delay exponent \\
		$\epsilon$ & error probability \\
		$\epsilon_t$ & target error probability \\
		$\epsilon^*$ & optimum error probability \\
		$\eta_{\alpha}$ & compensation loss priority factor \\
		$\eta_{\theta}$ & delay priority factor \\
		$\rho$ & signal to noise ratio \\
		$\rho_c$ & compensation SNR \\
		$\rho_{c_o}$ & operational point of compensation SNR\\
		$\rho_i$ &  signal-to-interference-plus-noise ratio \\
		$\rho_s$ & SINR of other non-compensating nodes \\\hline
	\end{tabular}
\end{table}

\section{Preliminaries} \label{system model}
\vspace{-0mm}
\subsection{Network model}
\vspace{-0mm}
We consider a transmission scenario in which $N$ nodes transmit packets with equal power to a common controller through a quasi-static Rayleigh fading collision channel with blocklength $T_f$ as shown in Figure 1. For convenience in this paper, we refer to one machine terminal as node. Given that all nodes transmit at the same time slot, the controller attempts to decode the transmitted symbols arriving from all nodes. When the controller decodes one node's data, the other streams appear as interference to it. For this model, imagine that a node needs to rise its rate temporarily for a critical reason. Later on, we study the interference alleviation scenarios for one node at a certain time slot while all nodes also keep transmitting at the same time.

Based on our network model, the received vector $\mathbf {y}_n\in \mathbb{C}^n$ of node $n$ is given by
\begin{align}\label{eq1}
	\mathbf {y}_n=h_n\mathbf {x}_n+\sum_{s\neq n} h_s\mathbf {x}_s+\mathbf {w},
\end{align}
where $\mathbf {x}_n \in \mathbb{C}^n$ is the transmitted packet of node $n$, $h_n$ is the fading coefficient for node $n$ which is assumed to be quasi-static with Rayleigh distribution. 
This implies that the fading coefficient $h_n$ remains constant for each block of $T_f$ channel uses which span the whole packet duration and changes independently from one block to another. The index $s$ includes all $N-1$ interfering nodes which collide with node $n$, and $\mathbf{w}$ is the additive complex Gaussian noise vector whose entries are defined to be circularly symmetric with unit variance. Given the signal to noise ratio $\rho$ of a single node, the signal-to-interference-plus-noise ratio of any node $n$ is
\begin{align}\label{eq2}
	\rho_i=\frac{\rho}{1+\rho\sum_{s} |h_s|^2}.
\end{align}
\begin{figure}[!t] % [!t] or [!b] or [!h] % force fitting, force top, force bottom, force text fitting
	\centering
	\includegraphics[width=0.98\columnwidth]{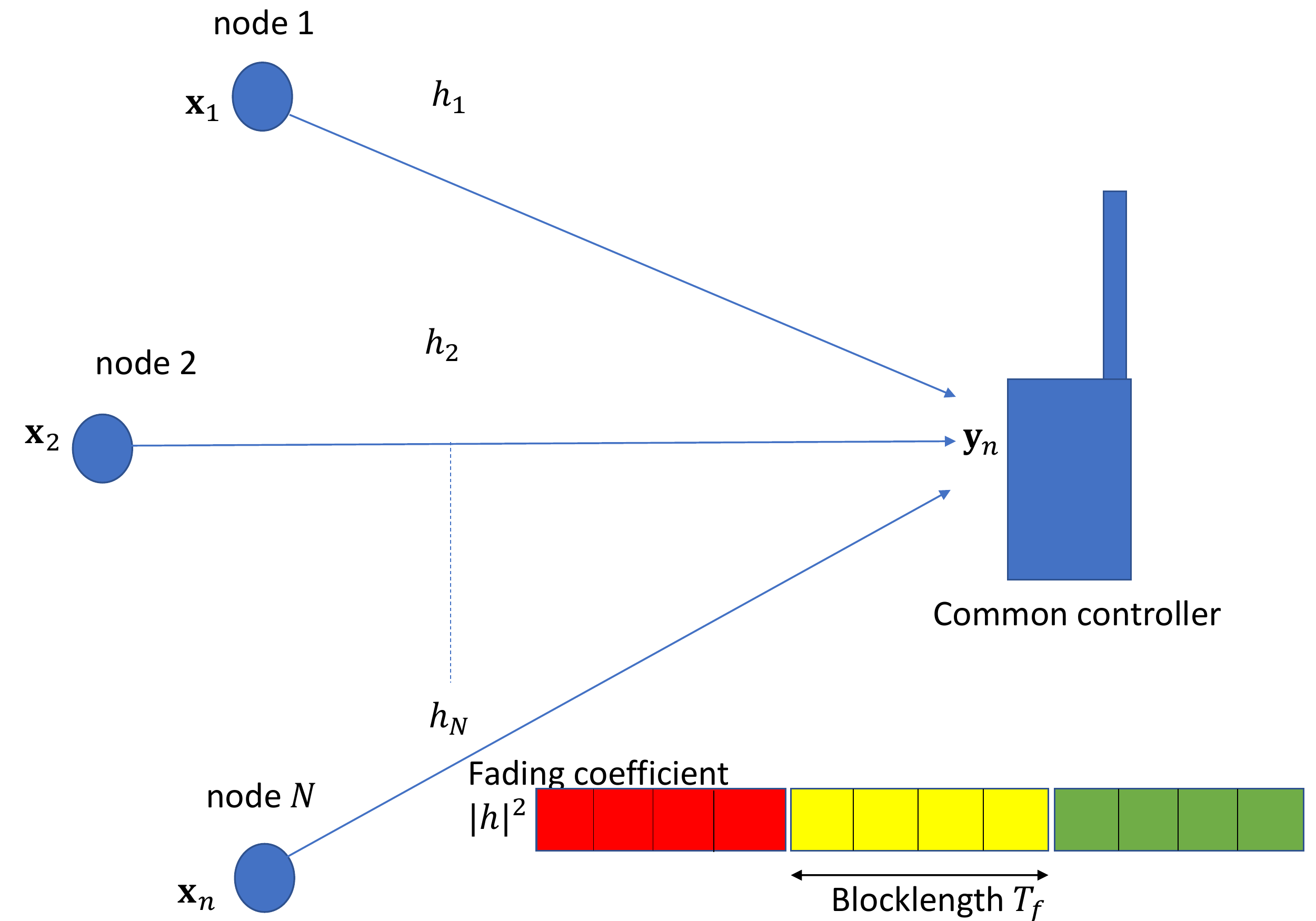}
	\vspace{-0mm}
	\caption{Network Layout.}
	\label{channel}
	\vspace{-0mm}
\end{figure}
To simplify the analysis, we assume that: \textit{i}) each node always has a packet to transmit (buffer is always non-empty); \textit{ii}) all nodes are equidistant from the common controller (i.e., same path loss); and \textit{iii}) the fading coefficients $h_s$ are independent and identically distributed and perfectly known to the receiver. Thus, as the number of nodes increases, the sum of Rayleigh distributed fading envelopes of $N-1$ interfering nodes becomes $\sum_{s} |h_s|^2\approx N-1$ \cite{fundamentals}, and the interference resulting from nodes in set $s$ can be modeled as in \cite{paper3} where (\ref{eq2}) reduces to 
\begin{align}\label{eq2.2}
	\rho_i=\frac{\rho}{1+\rho \ (N-1)}.
\end{align}
Note that CSI acquisition in this setup is not trivial and its cost is negligible whenever the channel remains constant over multiple symbols. Additionally, as in \cite{paper3} we aim to provide a performance benchmark for such networks without interference coordination. 
\vspace{-0mm}
\subsection{Communication at Finite Blocklength}
In this section, we present the notion of FB transmission, in which short packets are conveyed at rate that depends not only on the SNR, but also on the blocklength and the probability of error $\epsilon$ \cite{paper2}. In this case, $\epsilon$ has a small value but not vanishing. For error probability $\epsilon \in\left[ 0,1\right] $, the normalized achievable rate in bpcu is given by 
\begin{align}\label{eq3}
	\begin{split}
		r\approx C\left( \rho_i|h|^2\right) -\sqrt{\tfrac{V(\rho_i |h|^2)}{T_f} }  \operatorname{Q}^{-1}
		(\epsilon), 	
	\end{split}
\end{align}
where
\begin{align}\label{eq3.1}
	C(t)=\log_2(1+t)
\end{align}
is Shannon's channel capacity for sufficiently long packets, while
\begin{align}\label{eq3.2}
	V(t)=\left(1-\left( 1+t\right)^{-2}\right)\left(  \log_2 e\right) ^2
\end{align}
denotes the channel dispersion which appears for relatively short packets ($T_f<2000$ channel uses) \cite{paper5}, $Q(t)=\int_{t}^{\infty}\frac{1}{\sqrt{2 \pi}}e^{\frac{-s^2}{2}} ds$ is the Gaussian Q-function, and $Q^{-1} (t)$ represents its inverse, $\rho_i$ is the SINR and $|h|^2$ is the fading envelope.  

The channel is assumed to be Rayleigh quasi-static fading where the fading coefficients remain constant over $T_f$ symbols which spans the whole packet duration. For Rayleigh channels \cite{alouini}, the fading coefficients $Z=|h|^2$ have the following probability density function distribution 
\begin{align}\label{pdf}
	f_Z(z)=e^{-z}.
\end{align}

\subsection{Effective Capacity}  
The concept of EC indicates the capability of communication nodes to exchange data with maximum rate and certain latency constraint and thus, guarantees QoS by capturing the physical and link layers aspects. A statistical delay violation model implies that an outage occurs when a packet delay exceeds a maximum delay bound $D_{max}$ and its probability is defined as \cite{paper6}
\begin{align}\label{delay}
	P_{out\_ delay}=Pr(delay \geq D_{max}) \approx e^{-\theta \cdot EC \cdot D_{max}},
\end{align}	 
where $\Pr(\cdot)$ denotes the probability of a certain event. Conventionally, a network's tolerance to long delay is measured by the delay exponent $\theta$. The network has more tolerance to large delays for small values of $\theta$ (i.e., $\theta\rightarrow 0$), while for large values of $\theta$, it becomes more delay strict. For the infinite blocklength model, the EC capacity is defined as
\begin{align}\label{EC_Shanon}
	EC(\rho_i,\theta)=-\frac{\ln\mathbb{E}_{Z}\left(e^{-\theta T_f C\left( \rho_i |h|^2\right)} \right)}{T_f\theta},
\end{align} 
In quasi-static fading, the channel remains constant within each transmission period $T_f$ \cite{paper13}, and the EC is subject to the finite blocklength error bounds and thus, according to \cite{paper5} can be written as
\begin{align}\label{EC}
	EC(\rho_i,\theta,\epsilon)=-\frac{\ln\psi(\rho_i,\theta,\epsilon)}{T_f\theta},
\end{align} 
where
\begin{align}\label{psi}
	\psi(\rho_i,\theta,\epsilon)=\mathbb{E}_{Z}\left[\epsilon+(1-\epsilon)e^{-T_f \theta r}\right].
\end{align} 

In [\cite{paper5} and \cite{paper4}, the effective capacity is statistically studied for single node scenario in block fading, but never to a closed form expression. It has been proven that the EC is concave in $\epsilon$ and hence, has a unique maximizer. In what follows, we shall represent the EC expression for quasi-static Rayleigh fading.

\section{Effective Capacity analysis under Finite Blocklength} \label{EC_FB}
\begin{lemma} \label{L1}
	The effective capacity of a certain node communicating in a quasi-static Rayleigh fading channel is approximated by
	\begin{align}\label{Rayleigh}
		\begin{split}
			EC(\rho_i,\theta,\epsilon)\approx-\frac{1}{T_f\theta}  \ln \left[\epsilon +(1-\epsilon) \  \mathcal{J}\right],
		\end{split}
	\end{align}
	with
	\begin{align}\label{J}
		\begin{split}
			\mathcal{J}=e^{\frac{1}{\rho}} \rho^d\left[ \left( \frac{c^2}{2}+c+1\right) \Gamma\left(d+1,\frac{1}{\rho}\right)
			-\frac{c}2{}\left( c+1\right)  \rho^{-2}\Gamma\left(d-1,\frac{1}{\rho}\right)\right] ,
		\end{split}
	\end{align}
	where $d=\frac{-\theta T_f}{\ln(2)}$. Also let $c=\theta \sqrt{T_f} Q^{-1}(\epsilon)\log_2e$ and $x=\sqrt{1-\frac{1}{(1+\rho_i z)^{2}}}$, and $\Gamma(\cdot, \cdot)$ is the upper incomplete gamma function \cite[\S 8.350-2]{Gradshteyn}.
	
\end{lemma}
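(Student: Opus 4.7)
The plan is to start from the definition \eqref{EC}--\eqref{psi} and pull out the $\epsilon$ that does not depend on $z$, writing $\psi(\rho_i,\theta,\epsilon) = \epsilon + (1-\epsilon)\,\mathcal{J}$ where $\mathcal{J} := \mathbb{E}_Z\!\left[e^{-T_f\theta r}\right] = \int_0^\infty e^{-T_f\theta r(z)} e^{-z}\,dz$ against the Rayleigh pdf $f_Z(z)=e^{-z}$. The whole task then reduces to massaging $\mathcal{J}$ into the claimed closed form. Substituting the FB rate \eqref{eq3} with $C$ and $V$ from \eqref{eq3.1}--\eqref{eq3.2} and using $\log_2(1+\rho_i z)=\ln(1+\rho_i z)/\ln 2$, the exponent separates multiplicatively as
$e^{-T_f\theta r} = (1+\rho_i z)^{d}\,e^{c\,x}$,
with $d=-T_f\theta/\ln 2$, $c=\theta\sqrt{T_f}\,Q^{-1}(\epsilon)\log_2 e$, and $x=\sqrt{1-(1+\rho_i z)^{-2}}$ exactly as in the statement.

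The second step is a controlled approximation of $e^{cx}$. I would Taylor-expand to second order in $cx$, i.e.\ $e^{cx}\approx 1 + cx + c^2 x^2/2$. For the quadratic piece there is no square root to deal with, so I would keep $x^2 = 1-(1+\rho_i z)^{-2}$ \emph{exactly}. For the linear piece, which is the genuinely awkward term, I would use the first-order surd expansion $\sqrt{1-y}\approx 1-y/2$ with $y=(1+\rho_i z)^{-2}$, yielding $x\approx 1-\tfrac{1}{2}(1+\rho_i z)^{-2}$. Under these two approximations the integrand of $\mathcal{J}$ becomes a linear combination of $(1+\rho_i z)^{d}\,e^{-z}$ and $(1+\rho_i z)^{d-2}\,e^{-z}$ with scalar coefficients in $c$.

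The third step is evaluating the two surviving integrals. The change of variable $u=1+\rho z$ followed by $v=u/\rho$ turns each one into an upper incomplete gamma function: $\int_0^\infty (1+\rho z)^{d-k} e^{-z}\,dz = e^{1/\rho}\rho^{d+1-k}\,\Gamma(d+1-k,1/\rho)$ for $k\in\{0,2\}$, which supplies the common prefactor $e^{1/\rho}\rho^{d}$ and the two gamma functions $\Gamma(d+1,1/\rho)$ and $\rho^{-2}\Gamma(d-1,1/\rho)$. Collecting coefficients, the constant $1$ gives weight $1$ to $\Gamma(d+1,1/\rho)$; the $cx$ term contributes $c$ to $\Gamma(d+1,1/\rho)$ and $-c/2$ to $\rho^{-2}\Gamma(d-1,1/\rho)$; the $c^2x^2/2$ term contributes $c^2/2$ and $-c^2/2$ respectively. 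Summing gives the weights $1+c+c^2/2$ and $-(c/2)(c+1)$ that appear in \eqref{J}, completing the derivation.

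The hardest part is not the algebra but the \emph{justification} of the two approximations. The truncation $e^{cx}\approx 1+cx+c^2 x^2/2$ is acceptable only when $c$ is not too large, i.e.\ for moderate delay exponents and realistic target error probabilities; the surd linearization $\sqrt{1-y}\approx 1-y/2$ is tight for moderate-to-large $\rho_i z$ but degrades near $z=0$. Because the Rayleigh weight $e^{-z}$ does not suppress small $z$, a fully rigorous error bound would require bounding the remainder of each expansion and integrating against $(1+\rho_i z)^{d}e^{-z}$, which is delicate for $d<0$. Since the lemma is stated with $\approx$, I would not attempt a sharp analytic remainder; instead I would note this is an asymptotic/closed-form approximation whose accuracy is corroborated numerically later in the paper.
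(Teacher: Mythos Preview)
Your proposal is correct and follows essentially the same route as the paper: pull out $\epsilon$, factor $e^{-T_f\theta r}=(1+\rho_i z)^{d}e^{cx}$, Taylor-expand $e^{cx}$ to second order, linearize the square root as $x\approx 1-\tfrac{1}{2}(1+\rho_i z)^{-2}$ (the paper calls this a Laurent expansion), keep $x^2$ exact, and reduce the remaining integrals to upper incomplete gammas. One minor slip: your intermediate identity should read $\int_0^\infty (1+\rho z)^{d-k}e^{-z}\,dz = e^{1/\rho}\rho^{\,d-k}\,\Gamma(d+1-k,1/\rho)$, not $\rho^{\,d+1-k}$; your subsequent extraction of the common prefactor $e^{1/\rho}\rho^{d}$ and of $\rho^{-2}\Gamma(d-1,1/\rho)$ is nevertheless correct, so the final expression matches \eqref{J}.
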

\begin{proof}.
Please refer to Appendix A.
\end{proof}

\begin{lemma} \label{corollary 1}
	There is a unique local and global maximizer in $\epsilon$ for the per-node EC in quasi-static Rayleigh fading channels which is given by  
	\begin{align}\label{e*}
		\begin{split}
			\epsilon^*(\rho_i,c,d)=\arg\min_{0 \leq \epsilon \leq 1} \psi(\rho_i,c,d)\approx \  \epsilon +(1-\epsilon) \  \mathcal{J}
		\end{split}
	\end{align} 
\end{lemma}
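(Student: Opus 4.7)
The plan is to reduce the effective capacity maximization to a one-dimensional optimization over $\epsilon$ and then appeal to strict concavity of $EC$ in $\epsilon$ to conclude uniqueness of both the local and the global maximizer.

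First, I would note that because $x \mapsto -\tfrac{1}{T_f\theta}\ln x$ is strictly decreasing on $(0,\infty)$, the identity
\begin{equation*}
\arg\max_{\epsilon\in[0,1]} EC(\rho_i,\theta,\epsilon) \;=\; \arg\min_{\epsilon\in[0,1]} \psi(\rho_i,\theta,\epsilon)
\end{equation*}
holds provided $\psi$ stays strictly positive on $[0,1]$, which is immediate because $\psi = \mathbb{E}_Z[\epsilon + (1-\epsilon)e^{-T_f\theta r}]$ is a convex combination of non-negative quantities. This establishes the equivalence appearing in the statement.

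Next, I would establish strict convexity of $\psi$ (equivalently strict concavity of $EC$) in $\epsilon$ on $[0,1]$. Inside the expectation defining $\psi$ in \eqref{psi}, the integrand $\epsilon + (1-\epsilon)e^{-T_f\theta r(\epsilon,z)}$ is, for each fixed fading realization $z$, strictly convex in $\epsilon$: the only non-linear dependence enters through $r(\epsilon,z)$, which contains the $\operatorname{Q}^{-1}(\epsilon)$ term, and a direct computation of the second derivative in $\epsilon$ yields a strictly positive expression, using that $\operatorname{Q}^{-1}$ is a convex decreasing function on $(0,1)$ and $e^{-T_f\theta r}>0$. Strict convexity is preserved under $\mathbb{E}_Z[\cdot]$, so $\psi$ is strictly convex in $\epsilon$ on $[0,1]$, and consequently any local maximizer of $EC$ is its unique global maximizer.

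Finally, to locate this maximizer, I would check the boundary behavior. At $\epsilon=1$ one has $\psi=1$, giving $EC=0$, while as $\epsilon\to 0$ the quantity $\operatorname{Q}^{-1}(\epsilon)\to\infty$ drives $c\to\infty$ and hence, via Lemma~\ref{L1}, $\mathcal{J}\to\infty$, so $\psi\to\infty$ and again $EC\le 0$. Since $EC$ is strictly positive for a range of interior $\epsilon$, strict convexity forces the minimum of $\psi$ to be attained at a unique $\epsilon^*\in(0,1)$ solving $\tfrac{d\psi}{d\epsilon}=0$, which is the characterization claimed.

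The main obstacle is making the convexity argument fully rigorous for the closed-form approximation in Lemma~\ref{L1}, since $\mathcal{J}$ depends on $\epsilon$ through $c(\epsilon)=\theta\sqrt{T_f}\operatorname{Q}^{-1}(\epsilon)\log_2 e$ and $\psi(\epsilon) = \epsilon + (1-\epsilon)\mathcal{J}(c(\epsilon))$ is a quadratic-in-$c$ expression composed with the non-elementary function $\operatorname{Q}^{-1}$, so differentiating twice produces incomplete-gamma terms of both signs. I would therefore prove convexity at the level of the pre-integration integrand in \eqref{psi}, where the algebraic structure is simpler and the sign of the second derivative is transparent, and only then transfer the property to the closed form by taking expectations over $z$.
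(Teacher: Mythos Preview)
Your approach matches the paper's in overall structure: both reduce the maximization of $EC$ to the minimization of $\psi$ via monotonicity of $-\tfrac{1}{T_f\theta}\ln(\cdot)$, and both invoke convexity of $\psi$ in $\epsilon$ to obtain a unique maximizer. The difference is that the paper does not attempt a self-contained convexity proof at all; it simply cites \cite{paper5}, where the convexity of the expectation in \eqref{EC} is established independently of the distribution of $|h|^2$, and then immediately concludes. Your boundary analysis and your observation that the argument should be carried out at the level of the pre-integration integrand are both sound, and in fact more explicit than anything the paper writes down.

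There is, however, a genuine gap in your sketched convexity argument. You state that $Q^{-1}$ is a convex decreasing function on $(0,1)$, but computing
\[
(Q^{-1})''(\epsilon)=2\pi\, e^{(Q^{-1}(\epsilon))^2}\,Q^{-1}(\epsilon)
\]
shows that $Q^{-1}$ is convex only on $(0,1/2)$ and concave on $(1/2,1)$. Consequently, for the integrand $g(\epsilon)=\epsilon+(1-\epsilon)\,a\,e^{bQ^{-1}(\epsilon)}$ with $a,b>0$, the term $(1-\epsilon)h''(\epsilon)$ can be negative when $\epsilon>1/2$, and one must verify that the positive contribution $-2h'(\epsilon)$ dominates. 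This is precisely the non-trivial step that \cite{paper5} handles and that the paper outsources by citation; your ``direct computation'' does not go through as stated without this additional estimate. In the ultra-reliable regime $\epsilon\ll 1/2$ your argument is fine, but to claim uniqueness on the whole interval $[0,1]$ you either need the sharper bound or, as the paper does, the reference.
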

\begin{proof}
	The expectation in (\ref{EC}) is shown to be convex in $\epsilon$ in \cite{paper5} independent of the distribution of channel coefficients $z=|h|^2$. Thus, it has a unique minimizer $\epsilon^*$ which is consequently the EC maximizer given by (\ref{e*})
\end{proof}
Note that $c=\theta \sqrt{T_f} Q^{-1}(\epsilon)\log_2e$ is not a function of $z$, so it can be taken out of the integration which simplifies the optimization problem. To obtain the maximum per-node effective capacity $EC_{max}$, we simply insert $\epsilon^*$ into (\ref{Rayleigh}).

Having obtained the closed-form solution for EC, we proceed with studying the effect of multi-node interference on the per-node EC. We elaborate the effect of interference in quasi-static Rayleigh fading by plotting the per-node EC obtained from Lemma \ref{L1} and (\ref{eq2.2}) for 1, 5 and 10 machines in Figure 2. The network parameters are set as $T_f=1000, \rho=2$, and $\theta=0.01$. It is obvious that the per-node EC decreases when increasing the number of machines $N$ as more interference is added. Notice that the EC curves are concave in $\epsilon$ as envisaged by \cite{paper5} and hence, have a unique maximizer which is obtained from Lemma \ref{corollary 1} and depicted in the figure. Another observation worth mentioning is that the optimum probability of error $\epsilon^*$ which maximizes the EC becomes higher when increasing the number of machines. Notice that Figure 2 confirms that Lemma \ref{corollary 1} renders an accurate approximation to (\ref{EC}).

\begin{figure}[!t] % [!t] or [!b] or [!h] % force fitting, force top, force bottom, force text fitting
	\centering
	\includegraphics[width=0.95\columnwidth]{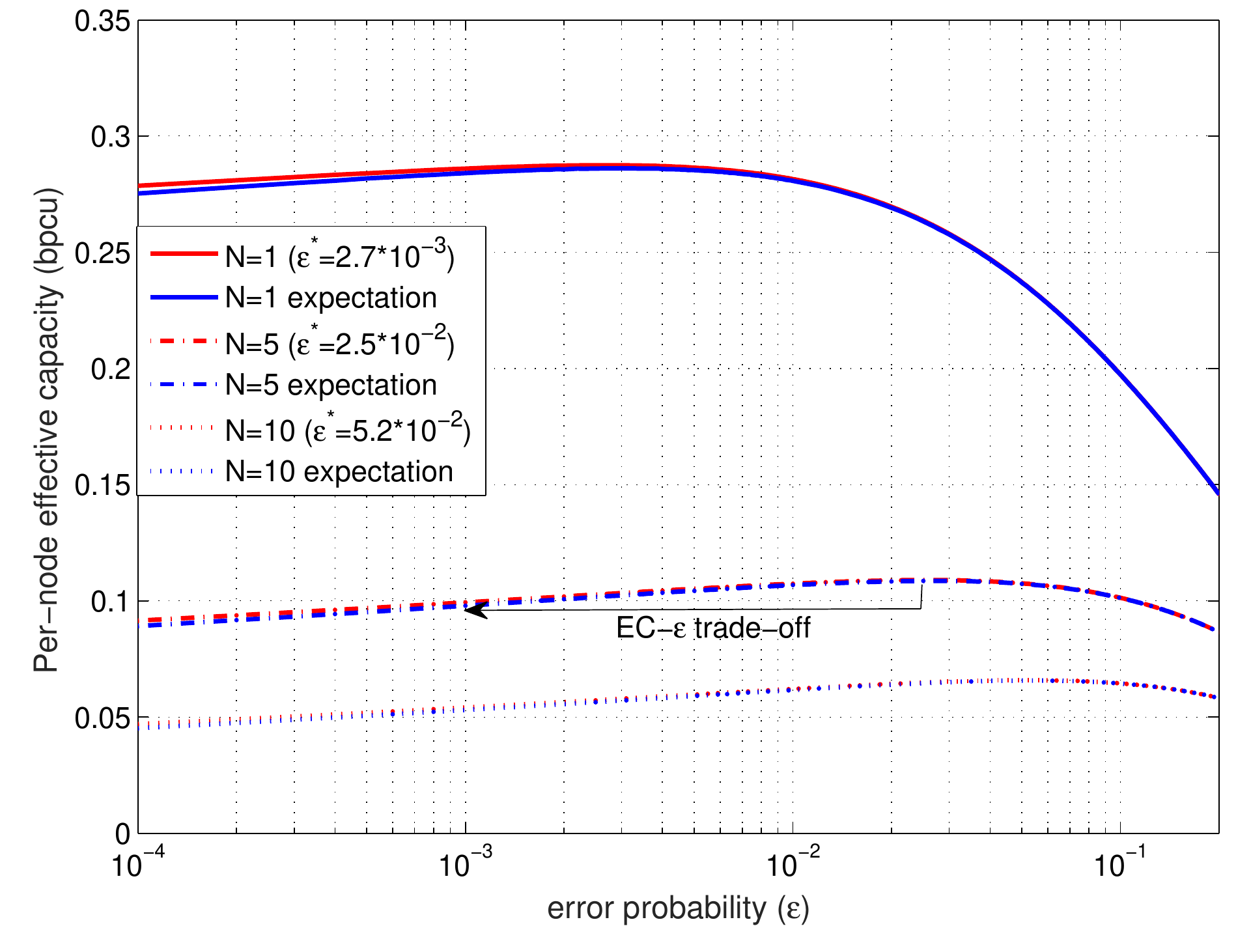}
	\centering
	\vspace{-0mm}
	\caption{Per-node effective capacity in bpcu as a function of error outage probability $\epsilon$ for different number of nodes, with $T_f=1000, \rho=2$, and $\theta=0.01$. }
	\label{interference effect}
	\vspace{-0mm}
\end{figure}

\section{Maximization of effective capacity in the ultra-reliable region} \label{EC_UR}
Taking a closer look at Figure 2, we observe the trade off between the per-node $EC$ and error probability $\epsilon$. It is apparent that we can earn a lower error probability $\epsilon$ by sacrificing only a small amount of EC. For example, we observe that for the 5 nodes network operating in quasi-static Rayleigh fading channel, if we tolerate a decrease in the EC from 0.11 to 0.1 bpcu, the error probability $\epsilon$ improves to $10^{-3}$ instead of $2.5\times10^{-2}$. Thus, sacrificing only 9$\%$ of the EC maximum value boosts the error probability by nearly 1250$\%$ and hence, leads to a dramatic enhancement of reliability. This observation is an open topic for analysis of the EC-$\epsilon$ trade-off with a target of maximizing the EC with some error constraint reflecting the reliability guarantees. 

In this section, we discuss the EC maximization in ultra reliable (UR) region (i.e when the probability of error is extremely low). We maximize the EC so that the error outage probability stays below a very small target value $\epsilon_t$. We define the optimization problem as
\begin{equation}\label{ur1}
\begin{split}
\max_{ } \ &EC(\rho_i,\theta,\epsilon) \\ 
s.t \  \ \ &\epsilon \leq \epsilon_t,
\end{split}
\end{equation} 
which according to (\ref{general2}) in Appendix A can be interpreted to 
\begin{align}\label{ur2}
	\begin{split}
		\min \ &\psi(\epsilon)\approx  \epsilon+(1-\epsilon)\vphantom{\int_{0}^{\infty}} \sum_{n=0}^2\int_{0}^{\infty}(1+\rho_i z)^{d} \frac{(cx)^n}{n!}e^{-z} dz, \\
		s.t \ \ &\epsilon \leq \epsilon_t. 
	\end{split}
\end{align}

The solution of (\ref{ur2}) renders the operational EC which guarantees ultra reliability according to the error constraint. Operational EC in this case is smaller than or equal to the maximum effective capacity $EC_{max}$ obtained from Lemma \ref{corollary 1}. Due to the convexity of $\psi(\epsilon)$ \cite{paper5}, the first derivative of $\psi(\epsilon)$ is positive if $\epsilon$ is greater than the global maximum and vice versa. Thus, we can check if the optimum solution is given by $\epsilon_t$ or not through the first derivative of $\psi(\epsilon)$. To elucidate, we write down the Lagrangian of (\ref{ur2}) as
\begin{align}\label{ur3}
	\mathfrak{L(\epsilon,\lambda)}=\psi(\epsilon)+\lambda(\epsilon-\epsilon_t),
\end{align}
where $\lambda$ is the Lagrangian multiplier. This leads to the following Karush$-$Kuhn$-$Tucker (KKT) conditions \cite{Boyd}
\begin{subequations} \label{ur4}
	\begin{align}
		\frac {\partial \mathfrak{L}}{\partial \epsilon}=\frac {\partial \psi(\epsilon)}{\partial \epsilon}+\lambda=0 \\
		\lambda(\epsilon-\epsilon_t)=0.
	\end{align}
\end{subequations}	
From the second condition, if $\lambda$ is greater than zero, this means that the constraint is active, where $\epsilon^*=\epsilon_t$ and $\frac {\partial \psi(\epsilon)}{\partial \epsilon}|_{\epsilon=\epsilon_t}$ is indeed negative. Reversing this conclusion, we can infer whether the constraint is active or not from the sign of $\frac {\partial \psi(\epsilon)}{\partial \epsilon}|_{\epsilon=\epsilon_t}$ so that
\begin{equation}\label{ur5}
\epsilon^*=\left\{\begin{array}{cl}
\epsilon_t & \frac {\partial \psi(\epsilon)}{\partial \epsilon}|_{\epsilon=\epsilon_t} < 0\\
\arg \min_{\epsilon \geq 0} \ \psi(\epsilon) , & \frac {\partial \psi(\epsilon)}{\partial \epsilon}|_{\epsilon=\epsilon_t} > 0
\end{array}
\right..
\end{equation}
The first derivative of $\psi(\epsilon)$ with respect to $\epsilon$ is derived as follows. From (\ref{general2}), we have 
\begin{align}\label{ur6}
	\begin{split}
		\mathcal{J}&=\sum_{n=0}^2\int_{0}^{\infty}(1+\rho_i z)^{d} \frac{(cx)^n}{n!}e^{-z} dz \\
		&=\mathcal{J}_1+c\mathcal{J}_2+\frac{c^2}{2}\mathcal{J}_3, 
	\end{split}
\end{align}
where
\begin{align}\label{ur8}
	\begin{split}
		\mathcal{J}_1&=e^{\frac{1}{\rho}} \rho^d\Gamma\left(d+1,\frac{1}{\rho}\right) \\
		\mathcal{J}_2&=e^{\frac{1}{\rho}} \rho^d\Gamma\left(d+1,\frac{1}{\rho}\right) -\frac{1}{2} e^{\frac{1}{\rho}} \rho^{d-2}\Gamma\left(d-2,\frac{1}{\rho}\right) , \\
		\mathcal{J}_3&= e^{\frac{1}{\rho}} \rho^d\Gamma\left(d+1,\frac{1}{\rho}\right) - e^{\frac{1}{\rho}} \rho^{d-2}\Gamma\left(d-2,\frac{1}{\rho}\right), \\
	\end{split}
\end{align}

Let $\delta=\theta\sqrt{T_f}\log_2e$, then
\begin{align}\label{ur7}
	\begin{split}
		\frac {\partial \psi(\epsilon)}{\partial \epsilon}&=1+(1-\epsilon)\frac {\partial \mathcal{J}}{\partial \epsilon}-\mathcal{J}\\
		&=1+(1-\epsilon)\left(\delta q(\epsilon) \mathcal{J}_2+\delta^2 Q^{-1}(\epsilon)q(\epsilon)\mathcal{J}_3 \right)-\mathcal{J}, 
	\end{split}
\end{align}
and $q(\epsilon) = \frac{\partial Q^{-1} (\epsilon)}{\partial \epsilon}=-\sqrt{2\pi} e^{\frac{Q^{-1}(\epsilon)}{2}}$ is the first derivative of $Q^{-1}(\epsilon)$ w.r.t $\epsilon$. Substituting, we get 
\begin{align}\label{ur8}
	\begin{split}
		\frac {\partial \psi(\epsilon)}{\partial \epsilon}=1-(1-\epsilon)\delta \sqrt{2\pi}e^{\frac{(Q^{-1}(\epsilon))^{2}}{2}}(\mathcal{J}_2+\delta \mathcal{J}_3)-\mathcal{J}.
	\end{split}
\end{align}

\section{Methods} \label{multinode}

Given that all nodes transmit at the same time slot, the controller attempts decoding the transmitted symbols arriving from all of them. When the controller decodes one node's data, the other streams appear as interference to it \cite{paper3}. For this model, imagine that a node needs to raise its EC in order to meet its QoS constraint. At the first glance, applying successive interference cancellation at the base station would seem to be attractive solution. However, this will result in extra delay for lower priority nodes where the decoder must wait for the higher priority packets to be decoded first to perform interference cancellation which dictates parallel decoding \cite{SIC}. We study the interference alleviation scenarios for one node at a certain time slot, while other nodes' packets also are transmitted and decoded at the same time as a lower bound worse case.
\vspace{-0mm}
\subsection{Power control}\label{power_control}
The method of power control is based on increasing the SNR of node $n$ to allow it recover from the interference effect. Let $\rho_c$ be the new SNR of node $n$, while the other nodes still transmit with SNR equal to $\rho$. Then, we equate the SINR equation in (\ref{eq2.2}) to the case where no collision occurs ($N=1$) to obtain
\vspace{-0mm}
\begin{equation}\label{rhoc2}
\begin{split}
\rho_c&=\rho \ (1+\rho (N-1)).
\end{split}
\end{equation}

When a certain node transmits with SNR of $\rho_c$, its EC is the same as in the case when transmitting with SNR equals to $\rho$ while other nodes are silent. The method of power control is simple; however, it causes extra interference into other nodes due to the power increase of the recovering node.

From (\ref{rhoc2}), we define the SINR of other nodes colliding in the same network (nodes in set $s$) after the compensation of one node as 
\begin{align}\label{eq44}
	\rho_s&=\frac{\rho}{1+\rho_c+\rho (N-2)}=\frac{\rho}{1+\rho \ (\rho+1) (N-1)} 
\end{align}

Now, we are interested in comparing the per-node EC in 3 cases: \textit{i}) No collision, \textit{ii}) collision without compensation, and \textit{iii}) collision with compensation of 1 node. Here, we look for the maximum achievable per-node EC in each case. Define the collision loss factor $\alpha$ as the ratio between the maximum effective capacity $EC_{max}$ in case of collision and in case of no collision as 
\begin{equation}\label{alpha}
\alpha=\frac{EC(\rho_i,\theta,\epsilon_i^*)}{EC(\rho,\theta,\epsilon^*)} 
\end{equation}
where $\epsilon^*$ and $\epsilon_i^*$ are the optimal error probabilities for the cases of no collision ($N=1$) and collision without compensation, respectively and both are obtained from (\ref{ur5}).

To determine the effect of compensation of one node on the other nodes, we define the compensation loss factor $\alpha_{c}$ as the ratio between maximum EC of other nodes (set $s$) in case of one node compensation and in case of no compensation. That is
\begin{align}\label{alphac}
	\alpha_{c}=\frac{EC(\rho_s,\theta,\epsilon_s^*)}{EC(\rho_i,\theta,\epsilon_i^*)} 
\end{align}
where $\epsilon_s^*$ is the optimum error probability obtained from (\ref{e*}) when the SINR is set to $\rho_s$. To understand the effect of increased interference on the network performance, we study the effect of SINR variations on EC for different delay constraints.
\begin{proposition} \label{p1}
	SINR variations have comparably limited effect on EC when the delay constraint becomes more strict and vice versa.
\end{proposition}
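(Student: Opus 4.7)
The plan is to quantify how $\theta$ modulates the sensitivity of the per-node EC to changes in $\rho_i$. The natural tool is the partial derivative $\partial EC/\partial \rho_i$ evaluated from the closed form in Lemma~\ref{L1}. First I would differentiate
$$EC(\rho_i,\theta,\epsilon)\approx -\frac{1}{T_f\theta}\ln\bigl[\epsilon+(1-\epsilon)\,\mathcal{J}(\rho_i,\theta,\epsilon)\bigr]$$
with respect to $\rho_i$, obtaining
$$\frac{\partial EC}{\partial \rho_i}=-\frac{1}{T_f\theta}\cdot\frac{(1-\epsilon)\,\partial \mathcal{J}/\partial \rho_i}{\epsilon+(1-\epsilon)\,\mathcal{J}}.$$
The explicit $1/(T_f\theta)$ prefactor is the first, transparent source of the claim: at fixed $\epsilon$ and $\rho_i$, larger $\theta$ mechanically scales down the whole sensitivity. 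The second step is to verify that the remaining ratio does not grow with $\theta$ fast enough to undo this damping. Substituting $d=-\theta T_f/\ln 2$ into $\mathcal{J}$ and differentiating the incomplete-gamma blocks via the standard identity $\tfrac{d}{d\rho}[\rho^{d}\Gamma(d+1,1/\rho)]=d\rho^{d-1}\Gamma(d,1/\rho)+e^{-1/\rho}\rho^{d-2}$, one checks that $\partial \mathcal{J}/\partial \rho_i$ grows at most polynomially in $\theta$ while $\epsilon+(1-\epsilon)\,\mathcal{J}$ stays sandwiched between $\epsilon$ and $1$. The $1/\theta$ factor therefore dominates, and $|\partial EC/\partial \rho_i|$ is damped in $\theta$.

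As a more robust backup, I would lift the argument to the integral form \eqref{psi}. As $\theta$ grows, the kernel $e^{-T_f\theta r(\rho_i,z,\epsilon)}$ concentrates at the worst channel realization, so $\psi\to \epsilon$ and $EC\to 0$ irrespective of $\rho_i$; hence the vertical spread of EC curves taken over different $\rho_i$ collapses. In the opposite regime $\theta\to 0$, a first-order expansion $e^{-T_f\theta r}\approx 1-T_f\theta r$ inside $\psi$ reduces $EC$ to the ergodic achievable rate, which is strongly SINR-dependent. These two limits sandwich the intermediate behavior and pin down the monotone character of the claim, while also indicating the scale of the quantitative gap between SINR curves at a given $\theta$.

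The main obstacle is the middle regime. The upper incomplete gamma function with first argument $d+1=-\theta T_f/\ln 2 +1$ has sign changes and oscillatory behavior in its asymptotic expansion when $|d|$ is large, so a fully rigorous monotonicity proof of $|\partial EC/\partial \rho_i|$ as a function of $\theta$ requires careful uniform bounds on $\rho^{d}\Gamma(d+1,1/\rho)$ rather than term-by-term estimates. I would therefore complement the analytic $1/\theta$-scaling argument with a numerical sweep over representative $(\theta,\rho_i,\epsilon)$ triples, consistent with the paper's simulation-driven style, and flag the region where $c=\theta\sqrt{T_f}\,Q^{-1}(\epsilon)\log_2 e$ is so large that the quadratic-in-$c$ truncation of $\mathcal{J}$ used to establish Lemma~\ref{L1} itself loses accuracy, since in that regime the approximation, rather than the underlying physics, could distort any conclusion drawn about SINR sensitivity.
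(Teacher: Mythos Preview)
Your leading argument has a gap: the $1/(T_f\theta)$ prefactor is \emph{not} the mechanism. Since $\mathcal{J}$ is an approximation to $\mathbb{E}_Z\bigl[e^{-T_f\theta r}\bigr]$ and $\rho_i$ enters only through $r$, differentiating in $\rho_i$ pulls down a factor $-T_f\theta\,\partial r/\partial\rho_i$ from the exponent, which exactly cancels the outer $1/(T_f\theta)$. Concretely,
\[
\frac{\partial EC}{\partial\rho_i}
= -\frac{1}{T_f\theta}\cdot\frac{(1-\epsilon)\,\partial_{\rho_i}\mathbb{E}_Z[e^{-T_f\theta r}]}{\psi}
= \frac{(1-\epsilon)\,\mathbb{E}_Z\!\left[\partial_{\rho_i}r\cdot e^{-T_f\theta r}\right]}{\psi},
\]
with no residual $1/\theta$. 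So ``larger $\theta$ mechanically scales down the whole sensitivity'' is false, and the burden falls entirely on the incomplete-gamma asymptotics you yourself flag as delicate. Your limiting arguments for $\theta\to 0$ and $\theta\to\infty$ are fine as heuristics, but they do not give the monotone statement the proposition asserts.

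The paper's proof avoids the closed form of Lemma~\ref{L1} altogether. Working pointwise with $\psi=\epsilon+(1-\epsilon)e^{-T_f\theta r}$, it computes (after the $T_f\theta$ cancellation)
\[
\frac{\partial EC}{\partial\rho_i}
= \frac{e^{-T_f\theta r}}{\epsilon+(1-\epsilon)e^{-T_f\theta r}}\,\mathcal{K},
\qquad \mathcal{K}=(1-\epsilon)\,\frac{\partial r}{\partial\rho_i}>0,
\]
and then takes one more derivative in $\theta$ (noting $r$ does not depend on $\theta$) to obtain a strictly negative cross-partial $\partial_\theta\partial_{\rho_i}EC<0$. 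That single sign check is the whole argument: the SINR sensitivity $\partial EC/\partial\rho_i$ is monotonically decreasing in $\theta$, with no gamma-function estimates, no Taylor truncation of $e^{cx}$, and no middle-regime caveat.
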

\begin{proof}.
Please refer to Appendix B.  
\end{proof}

Furthermore, we include the compensation factor $\gamma_c$ as the ratio of the maximum achievable EC of the compensated node after and before compensation which is expressed as
\begin{equation}\label{gammac}
\gamma_c=\frac{EC(\rho,\theta,\epsilon^*)}{EC(\rho_i,\theta,\epsilon_i^*)}=\frac{1}{\alpha} 
\end{equation}
where $\gamma_c$ is a gain factor (i.e., $\gamma_c \geq 1$). Finally, we define the total loss factor $\alpha_t$ as the ratio between the maximum attainable effective capacity of colliding nodes in the system (set $s$) when a node compensates to the maximum attainable EC of these nodes if they were not colliding at all. That is
\begin{equation}\label{alphat}
\alpha_t=\frac{\alpha_c}{\gamma_c}=\alpha . \alpha_c 
\end{equation}

\subsection{Graceful degradation of the delay constraint} \label{theta comp}
Here, we discuss how to compensate for the decrease in the per-user EC for the multiuser interference scenario by changing the value of delay constraint $\theta$. More specifically, we determine how the delay exponent $\theta$ should be gracefully degraded to obtain the same $EC_{max}$ as if the target node was transmitting without collision. This represents the cases where a node has flexible QoS constraint delay wise, so that the EC could be attained given a slight variation on the overall delay as envisioned in \cite{paper1}. Let $\theta$ be the original delay exponent and $\theta_i$ represent the new gracefully degraded one; $\theta_i$ is obtained by solving
\begin{align}\label{RE}
	EC(\rho,\theta,\epsilon^*)=EC(\rho_i,\theta_i,\epsilon_i^*)
\end{align} 
where $\epsilon_i^{*}$ is the maximizer of EC for the parameters $(\rho_i,\theta_i)$ and $\epsilon_i^*$ is the optimum error probability for $(\rho_i,\theta_i)$. The solution of (\ref{RE}) renders the necessary value of $\theta_i$ to compensate for the EC decrease due to collision in this case. Notice that (\ref{RE}) can be solved numerically to obtain the necessary value of $\theta_2$ to compensate for the rate decrease due to collision in this case.

\subsection{Joint compensation model} \label{joint compensation} 
To mitigate the side effects of power control and graceful delay constraint degradation, we apply a joint model in which both methods are partially employed. Define the operational SINR in power controlled compensation for nodes in set $s$ as $\rho_{s_o}$, where $\rho_{s_o}$ lies on the interval [$\rho_s$ $\rho_i$]. Using (\ref{eq44}), the operational SNR for the recovering node can be written as
\begin{equation}\label{roco}
\rho_{c_o}=\frac{\rho}{\rho_{s_o}} -1-\rho(N-2),
\end{equation} 
and the operational point of the compensation loss factor $\alpha_{c_o}$ is
\begin{equation}\label{alphaco}
\alpha_{c_o}=\frac{EC(\rho_{s_o},\theta,\epsilon_{s_o}^*)}{EC(\rho_i,\theta,\epsilon_i^*)} 
\end{equation}
where $\epsilon_{s_o}^*$ is the optimum error probability  obtained from (\ref{e*}) for the parameters ($\rho_{s_o},\theta_1$). $\alpha_{c_o}$ is considered to be the loss factor caused by the part of compensation performed via power control. 

Next, we perform the rest of compensation via graceful degradation of $\theta$ as in Section \ref{theta comp}. To obtain $\theta_2$, we solve
\begin{align}\label{theta joint}
	&EC(\rho,\theta,\epsilon^*)=EC\left( \frac{\rho_{c_o}}{1+\rho (N-1)},\theta_2,\epsilon_2^*\right)  
\end{align}    
From (\ref{theta joint}), we compute the necessary value of $\theta_2$ to continue the compensation process via graceful degradation of the delay constraint.

Now, we propose an objective function leveraging the network performance for the joint model. First, we define the priority factor $\eta_{\alpha}$ as a measure of the risk of decrease in EC of nodes in set $s$ when the compensating node boosts its transmission power. In other words, the higher the value of $\eta_{\alpha}$, the more important it is not to allow much degradation of EC of nodes in set $s$ and hence, we try not to compensate via power control and shift compensation towards $\theta$ graceful degradation. On the other hand, we define the priority factor $\eta_\theta$ as a measure of strictness of the delay constraint (i.e., the higher the value of $\eta_\theta$, the more strict it is not to degrade delay constraint and hence, the less we are allowed to relax $\theta$ to get higher EC for the compensating user). Thus, we can formalize our objective function as the summation
\begin{equation}\label{eta}
\eta=\eta_\alpha \alpha_{c_o}+\eta_\theta \theta_2
\end{equation} 
where ($\alpha_{c_o},\theta_2$) is the operational point. Now, we choose this operational point to satisfy 
\begin{equation}\label{op}
\begin{split}
\eta_{max}=&\max_{\theta_2 \geq 0} \ \eta_\alpha \alpha_{c_o}+\eta_\theta \theta_2 \\ 
s.t \ \ & \rho_s \leq \rho_{s_o} \leq \rho_i \\
\end{split}
\end{equation} 
where the solution to this problem gives the optimum operational point which can be found from (\ref{roco}), (\ref{alphaco}) and (\ref{theta joint}).

\section{Results and discussion}
Figure 3 depicts the operational and maximum EC as a function of the number of nodes N for $T_f=1000, \rho=10$ dB, $\epsilon_t$ different values of $\theta$. The figure reveals that operating in the UR region necessitates a considerable sacrifice in the EC. Specifically, as the number of nodes increases and the SINR decreases, the amount of EC sacrifice becomes more significant as the gap between operational and optimum EC becomes higher. For example, at $N=30$, the nodes lose up to 50 \% of its EC to maintain reliability.
\begin{figure}[!t] % [!t] or [!b] or [!h] % force fitting, force top, force bottom, force text fitting
	\centering
	\includegraphics[width=0.95\columnwidth]{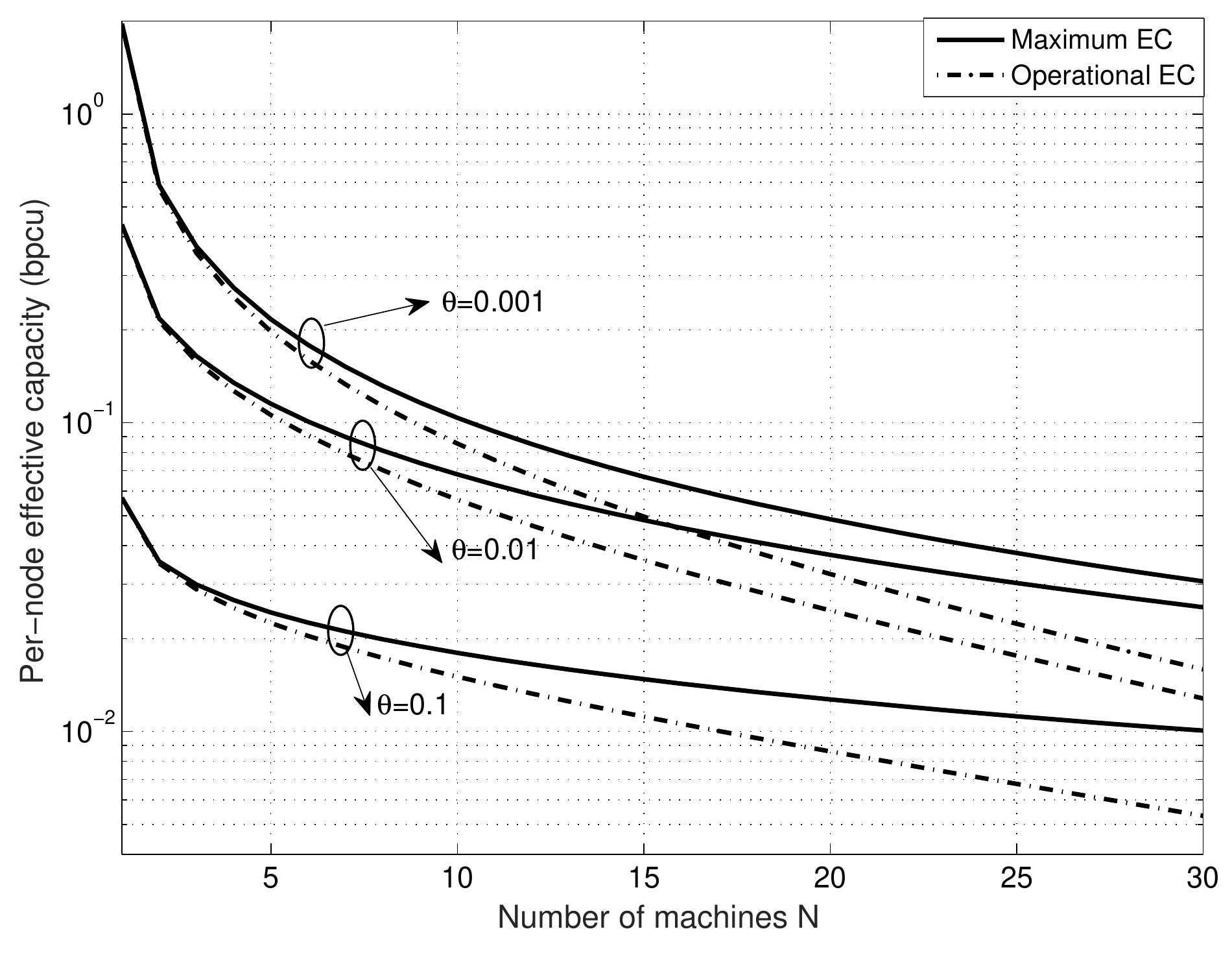}
	\vspace{-0mm}
	\caption{Operational and maximum effective capacity vs number of machines N for $T_f=1000$, and $\rho=10$.}
	\label{UR}
	\vspace{-0mm}
\end{figure}

As an example, consider 3 colliding nodes where $T_f=500, \theta=0.01$, and $\rho=0.5$. Applying (\ref{rhoc2}), we get $\rho_c=1$. Hence, the interference effect is canceled for a certain node by boosting its SNR from 0.5 to 1. To elucidate more, Figure 4 shows the effect of collision of 5 nodes with and without compensation for $T_f=1000, \theta=0.1$, and $\rho=1$. We plot the per-node EC before compensation of 1 node. Then we compare it to the effective capacities of the 4 remaining nodes after 1 node compensates using (\ref{rhoc2}). The figure also shows the EC of the compensating node.
\begin{figure}[!t] % [!t] or [!b] or [!h] % force fitting, force top, force bottom, force text fitting
	\centering
	\includegraphics[width=0.95\columnwidth]{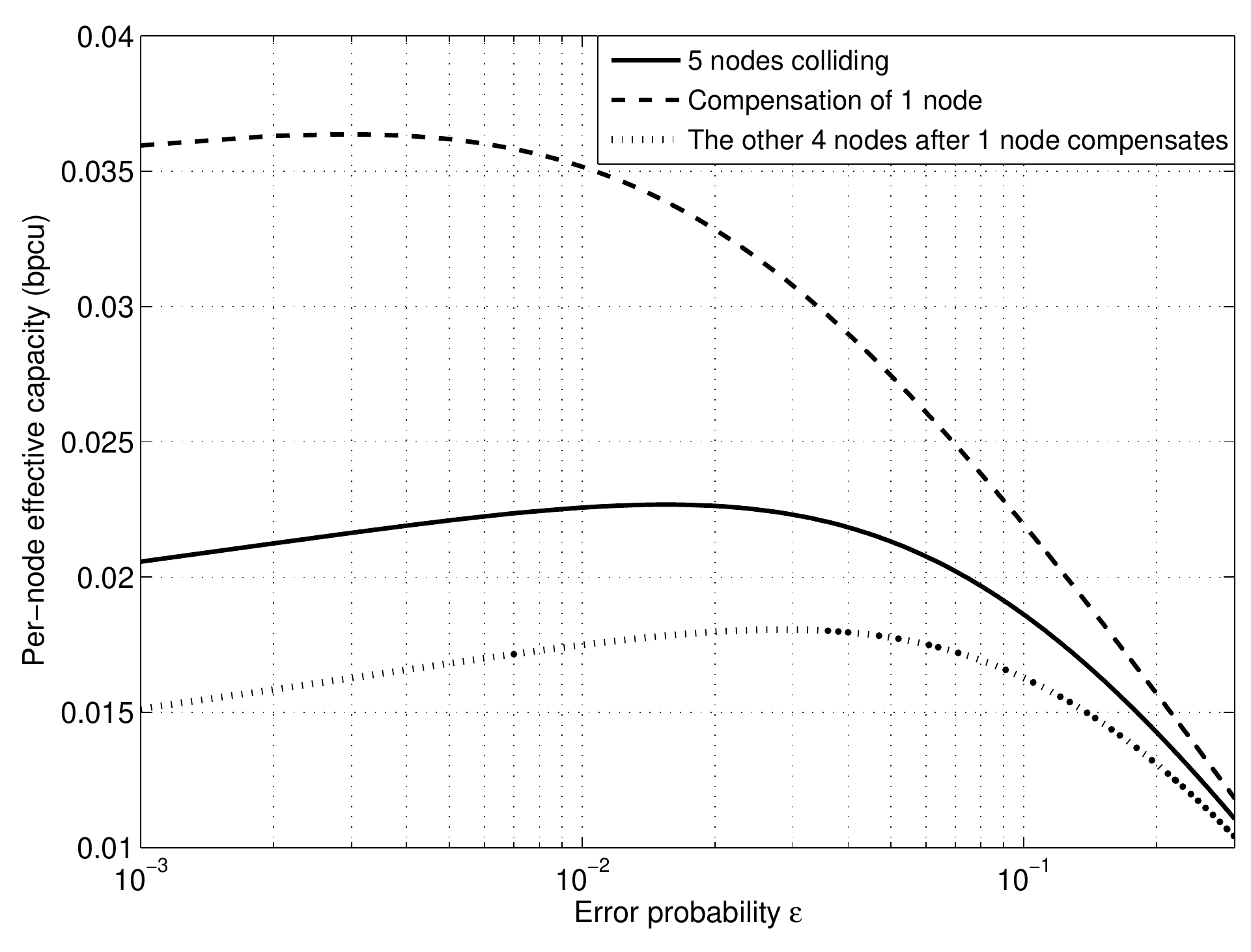}
	\centering
	\caption{Per-node effective capacity as a function of error outage probability $\epsilon$ with $T_f=1000, \theta=0.1$, and $\rho=1$}
	\label{5 nodes}
	\vspace{-0mm}
\end{figure}

Consider $T_f=1000$ and $\rho=1$, then the left axis of Figure 5 depicts the compensation loss factor $\alpha_c$ for different number of nodes $N$ with $\theta=0.1$ and 0.001.  The figure shows that $\alpha_c$ is lower for smaller values of $\theta$. Hence, the effect of compensation appears to be more severe for less stringent delay constraints. This follows from Proposition \ref{p1}, which states that SINR variations have less of an effect on delay strict networks and vice versa. Finally, we notice that the compensation loss factor decreases rapidly for a less dense network. The right axis of Figure 5 shows the compensation factor $\gamma_c$ versus the number of nodes in the system $N$. $\gamma_c$ appears to have a linear behavior as a function of $N$. That is, the effect of compensation for the compensated node increases linearly with $N$. The rate by which $\gamma_c$ increases is faster for smaller $\theta$. The compensation factor $\gamma_c$ (compensation gain) is higher for less stringent delay constraint (less $\theta$). It appears that $\alpha_c$ and $\gamma_c$ have are inversely correlated to each other as envisioned by (\ref{gammac}).  
\begin{figure}[!t] % [!t] or [!b] or [!h] % force fitting, force top, force bottom, force text fitting
	\centering
	\includegraphics[width=0.95\columnwidth]{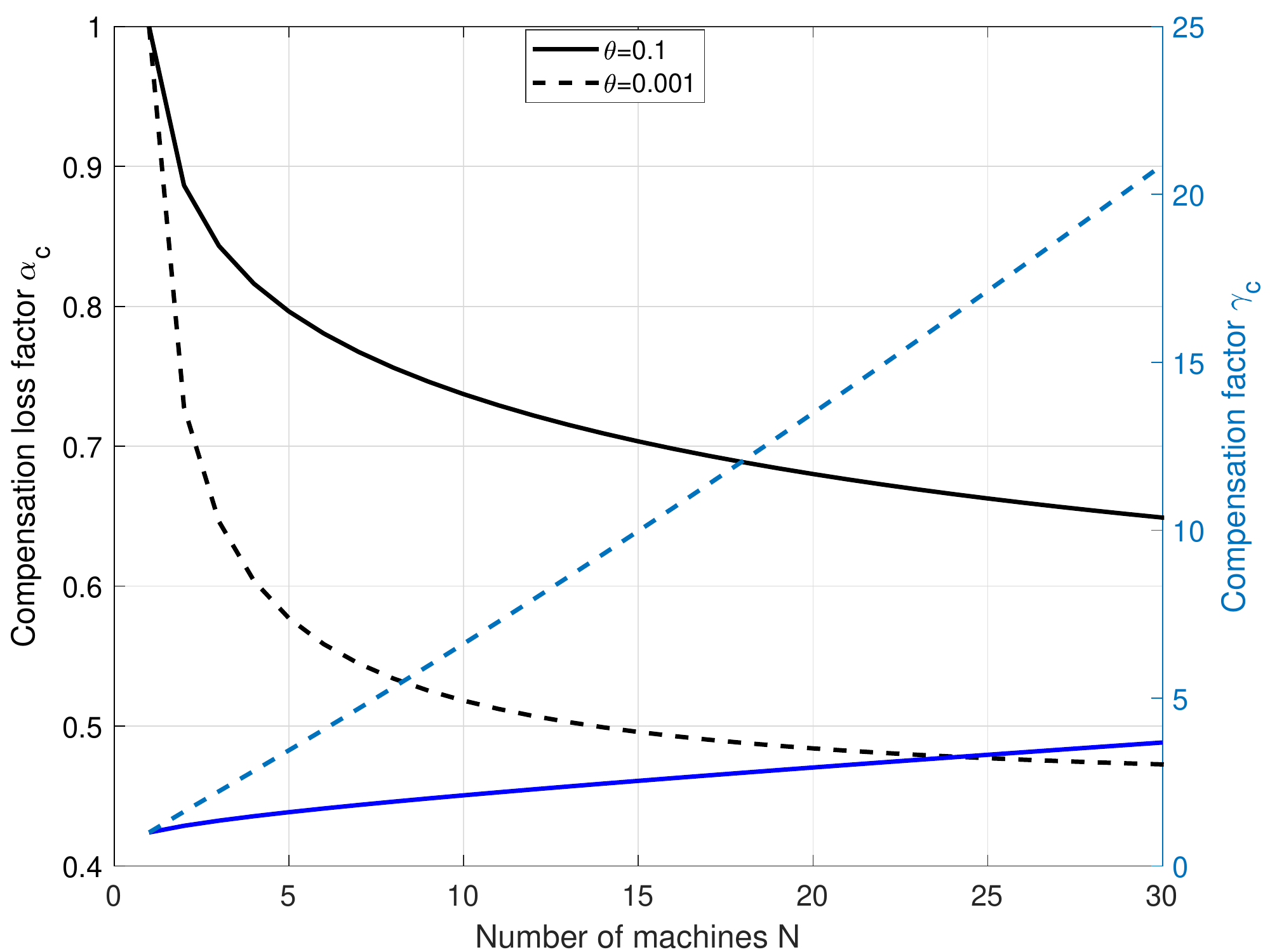}
	\centering
	\vspace{-0mm}
	\caption{Compensation loss factor $\alpha_c$ as a function of the number of nodes $N$, for distinct QoS exponents.}
	\label{alpha_c}
	\vspace{-0mm}
\end{figure}

\begin{figure}[!t] % [!t] or [!b] or [!h] % force fitting, force top, force bottom, force text fitting
	\centering
	\includegraphics[width=0.95\columnwidth]{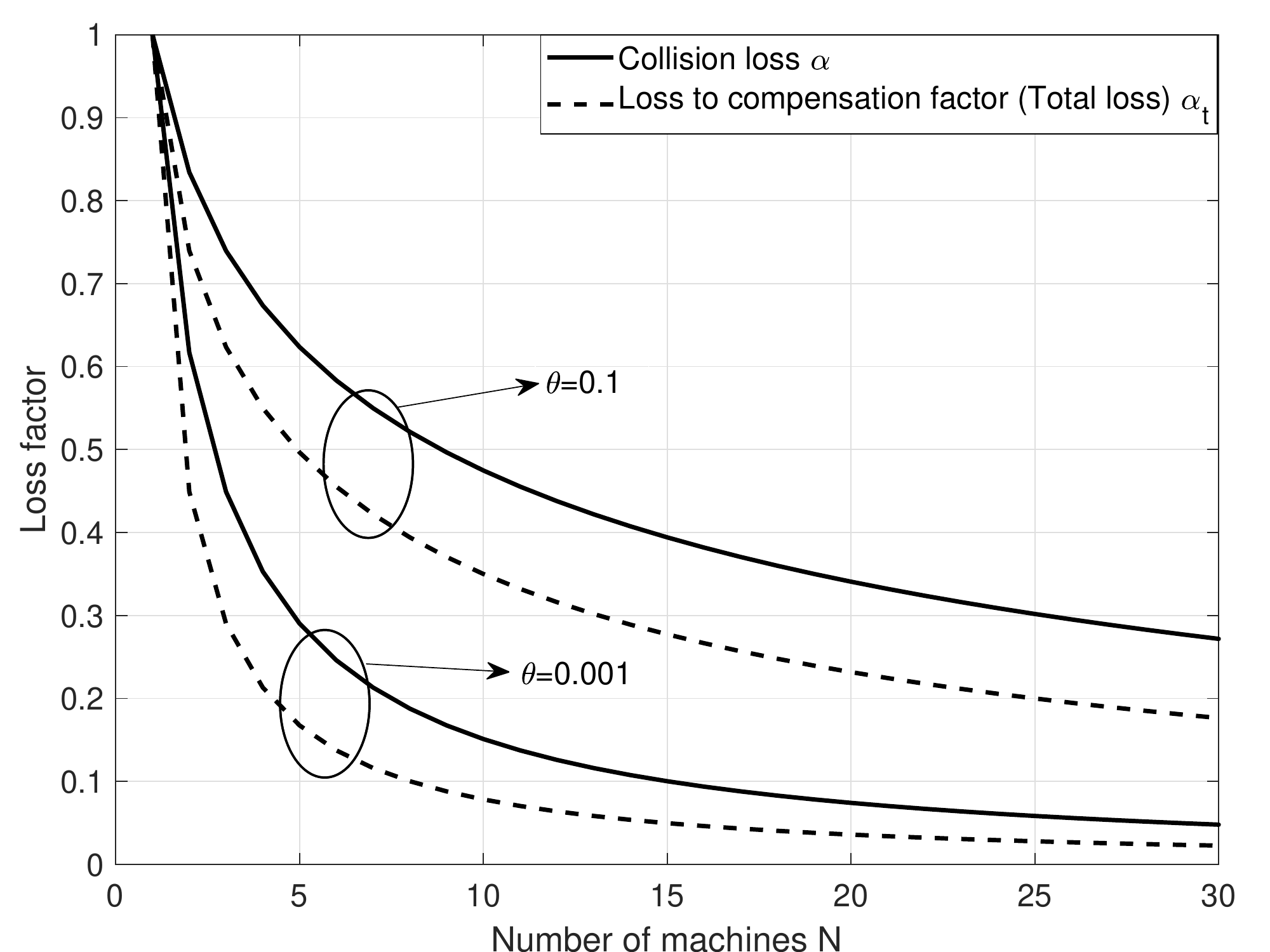}
	\centering
	\caption{Loss factors $\alpha$ and $\alpha_t$ vs number of nodes $N$}
	\label{loss}
	\vspace{-0mm}
\end{figure}

The collision loss factor $\alpha$ and the total loss factor $\alpha_t$ are also depicted in Figure 6. As observed from the figure, the total loss is nearly the same as the collision loss for small $N$. The gap starts to appear and becomes almost constant for high number of nodes. This gap is tighter in the case of small $\theta$ (i.e., the collision loss $\alpha$ is more dominant). Also, there is higher loss (both collision and total) in case of less stringent delay constraint $\theta$. Thus, the effect of collision and compensation is more annoying in case of less stringent delay constraint (smaller $\theta$). Furthermore, $\alpha$ and $\alpha_t$ have the same behavior as $\alpha_c$. They decrease rapidly for small number of nodes and tend to be constant for high $N$.

Consider (\ref{RE}) with $N=5$, $\theta_1=0.05$, $\rho=1$ and $T_f=1000$, we get $\theta_i=0.023$. Thus, by gracefully degrading the delay constraint from $0.05$ to $0.023$, we attain the same value for the maximum effective capacity $EC_{max}=0.066$ as was depicted in Figure 4 in \cite{eucnc}. Here in Figure 7, we illustrate the graceful degradation of the delay constraint by plotting the delay outage probability $P_{out\_ delay}$ as a function of the maximum delay bound $D_{max}$ before and after compensation. The figure shows that for a delay outage probability of $10^{-3}$, the compensation process is performed by extending the allowable delay $D_{max}$ from 3600 to 4600 channel uses. We perform a limited delay extension ($\approx 25 \%$) because the rise in EC partially compensates the graceful degradation of $\theta$ in (\ref{delay}). Note that the optimum error probabilities have different values in each case due to the change in SINR in (\ref{J}) \cite{eucnc}.   

\begin{figure}[!t] % [!t] or [!b] or [!h] % force fitting, force top, force bottom, force text fitting
	\centering
	\includegraphics[width=0.95\columnwidth]{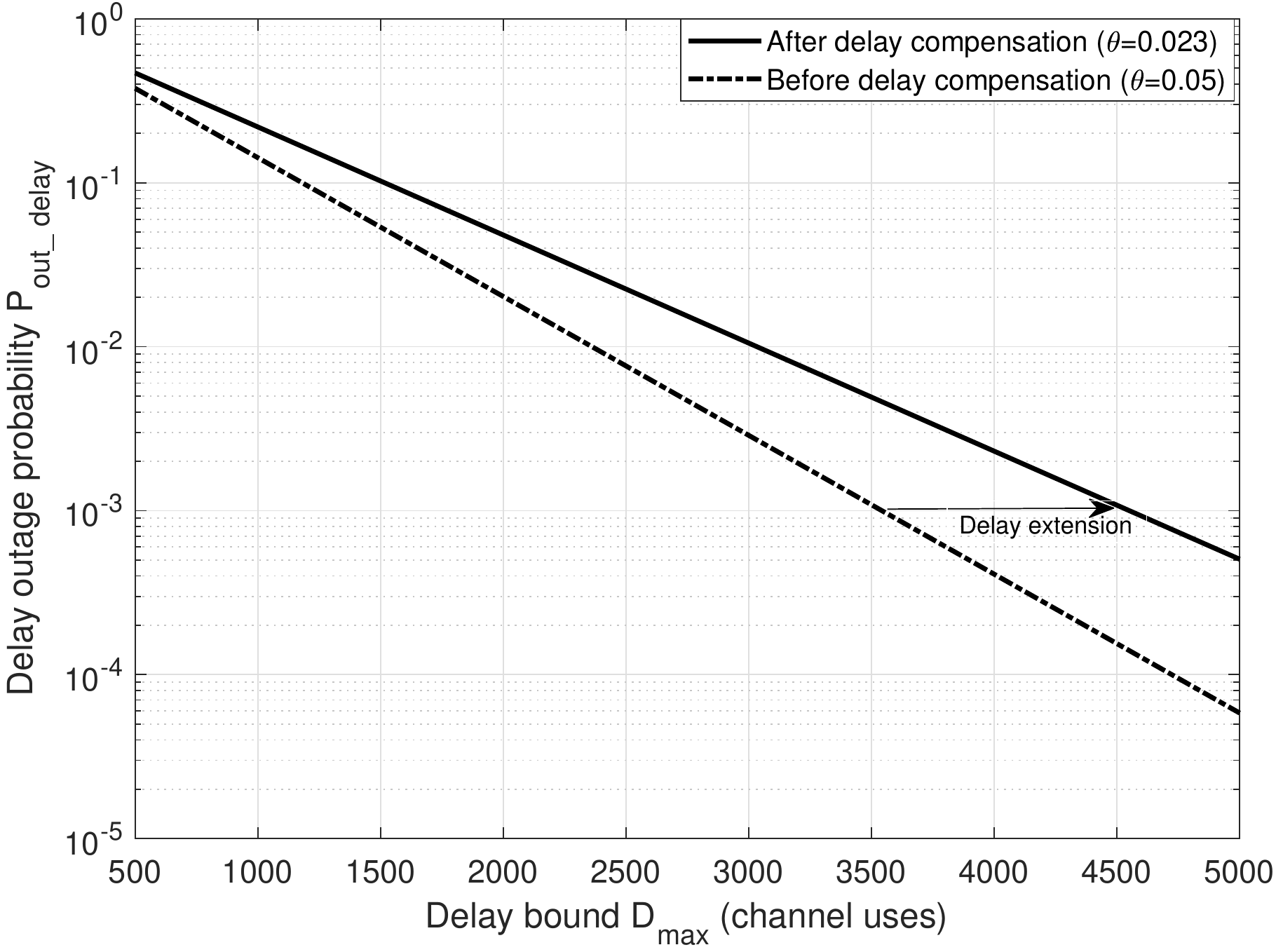}
	\centering
	\vspace{-0mm}
	\caption{Graceful degradation delay constraint $\theta$ in case of 5 nodes colliding where $T_f=1000$ and $\rho=1$.}
	\label{theta compensation}
	\vspace{-0mm}
\end{figure}

\begin{figure}[!t] % [!t] or [!b] or [!h] % force fitting, force top, force bottom, force text fitting
	\centering
	\includegraphics[width=0.95\columnwidth]{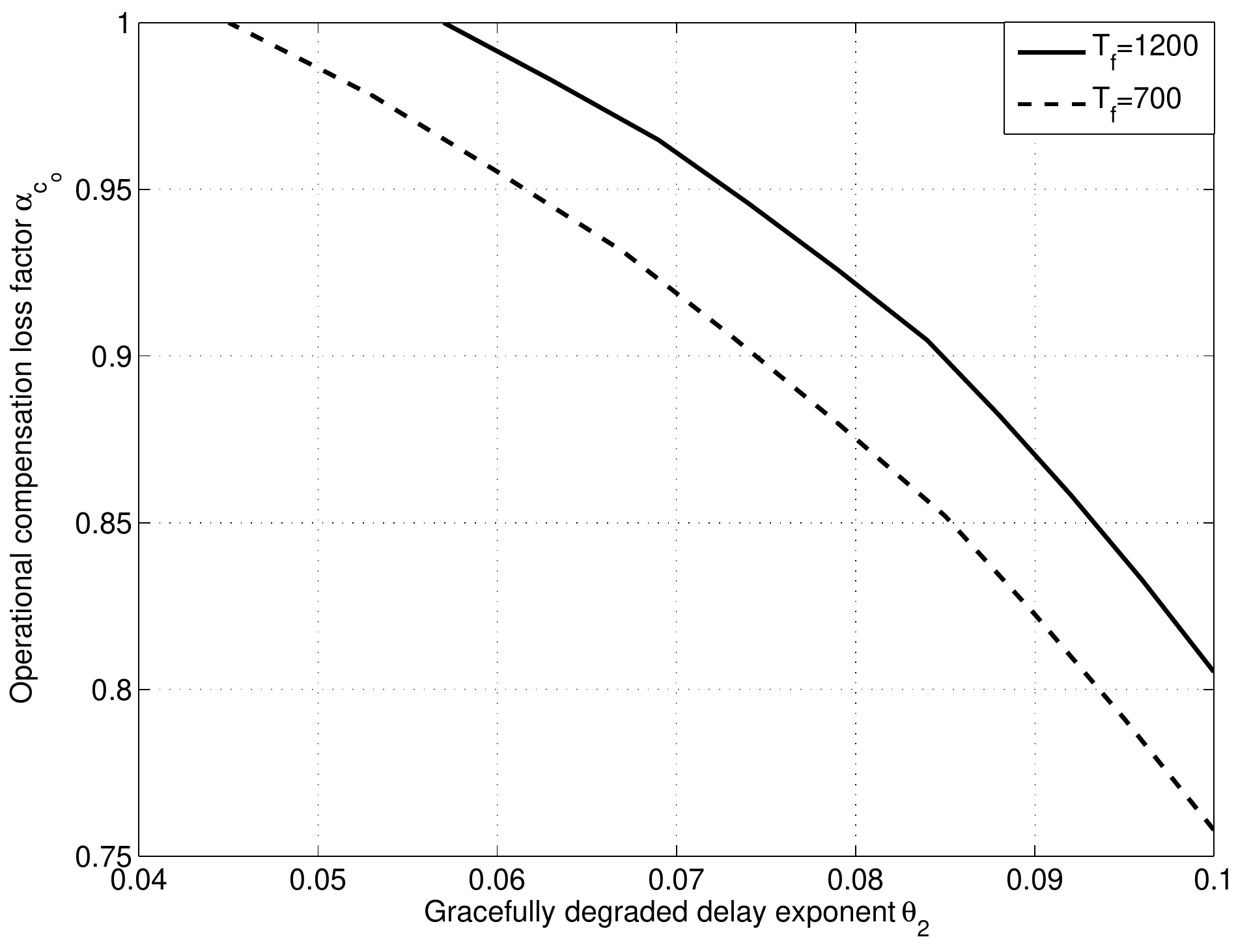}
	\centering
	\vspace{-0mm}
	\caption{Trade off between compensation loss factor via power control $\alpha_{c_o}$ and graceful degradation of delay constraint $\theta_{2}$ for different operational points}
	\label{trade off}
	\vspace{-0mm}
\end{figure}

Figure 8 illustrates different operational points for the joint model for different blocklength $T_f$ where $N=5$, $\rho=1$, $P_{out\_ delay}=10^{-3}$ and $\theta_1=0.1$. For example, when $T_f=700$, we select the operational point $\alpha_{c_o}=0.9, \theta_2=0.075$. This implies that a part of compensation will be performed via power control, which leads to $10\%$ loss in EC of other nodes (set $s$). Then, the rest of compensation will be performed by gracefully degrading its $\theta$ from 0.1 to 0.075. The maximum delay of the recovering node remains 2500 channel uses before and after recovery as restoring the EC compensates for the decrease in $\theta$ in (\ref{delay}). The figure also shows that for smaller packet sizes, the amount of losses due to compensation are higher. 

According to the system parameters, certain values of the priority factors $\eta_\alpha$ and $\eta_\theta$ may produce a concave maximization problem for the objective function $\eta$. For an MTC network with 15 devices where $T_f=1000, \rho=2, \theta_1=0.1,\eta_\alpha=1$ and $\eta_\theta=4$, the optimum value of $\rho_{s_o}$ which maximizes the objective function $\eta$ will be 0.057 according to Figure 9. This value corresponds to the operational point $\alpha_{c_o}=0.9397$ and $\theta_2=0.053$. The SNR of the recovering node becomes $\rho_{c_o}=8.08$. In other words, in order to maximize the network throughput according to the given priority factors, the compensating node boosts its SNR from 2 to 8.08 and gracefully degrades its delay exponent from 0.1 to 0.053. This results in only 6 \% loss in EC of other nodes as depicted in Figure 10. Priority factors are left for the designer's preferences depending on reliability or latency requirements.

Finally in Figure 11, we compare the EC vs the delay constraint $\theta$ for fixed and variable rate transmissions. The SNR and the blocklength are set as $\rho=1$, and $T_f=1000$, respectively. It can be observed that for high values of $\theta$, fixed rate transmission performs strictly better. The figure also confirms the fact that EC degrades with the increase of delay constraint $\theta$.

\begin{figure}[!t] % [!t] or [!b] or [!h] % force fitting, force top, force bottom, force text fitting
	\centering
	\includegraphics[width=0.95\columnwidth]{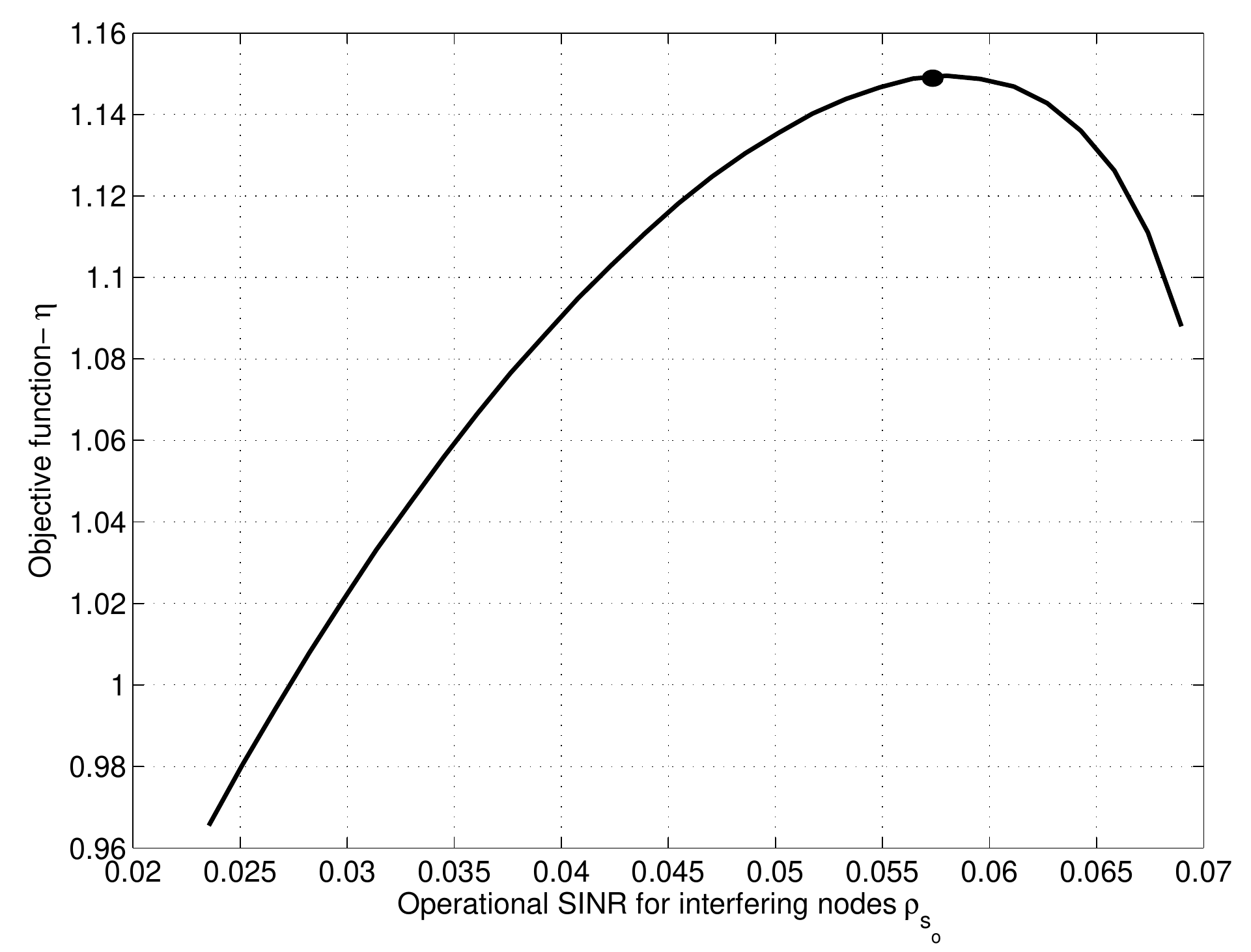}
	\centering
	\caption{Throughput $\eta$ for $T_f=1000, \theta=0.1, \rho=2$, and $N=15$}
	\label{throughput}
	\vspace{0mm}
\end{figure}

\begin{figure}[!t] % [!t] or [!b] or [!h] % force fitting, force top, force bottom, force text fitting
	\centering
	\includegraphics[width=0.95\columnwidth]{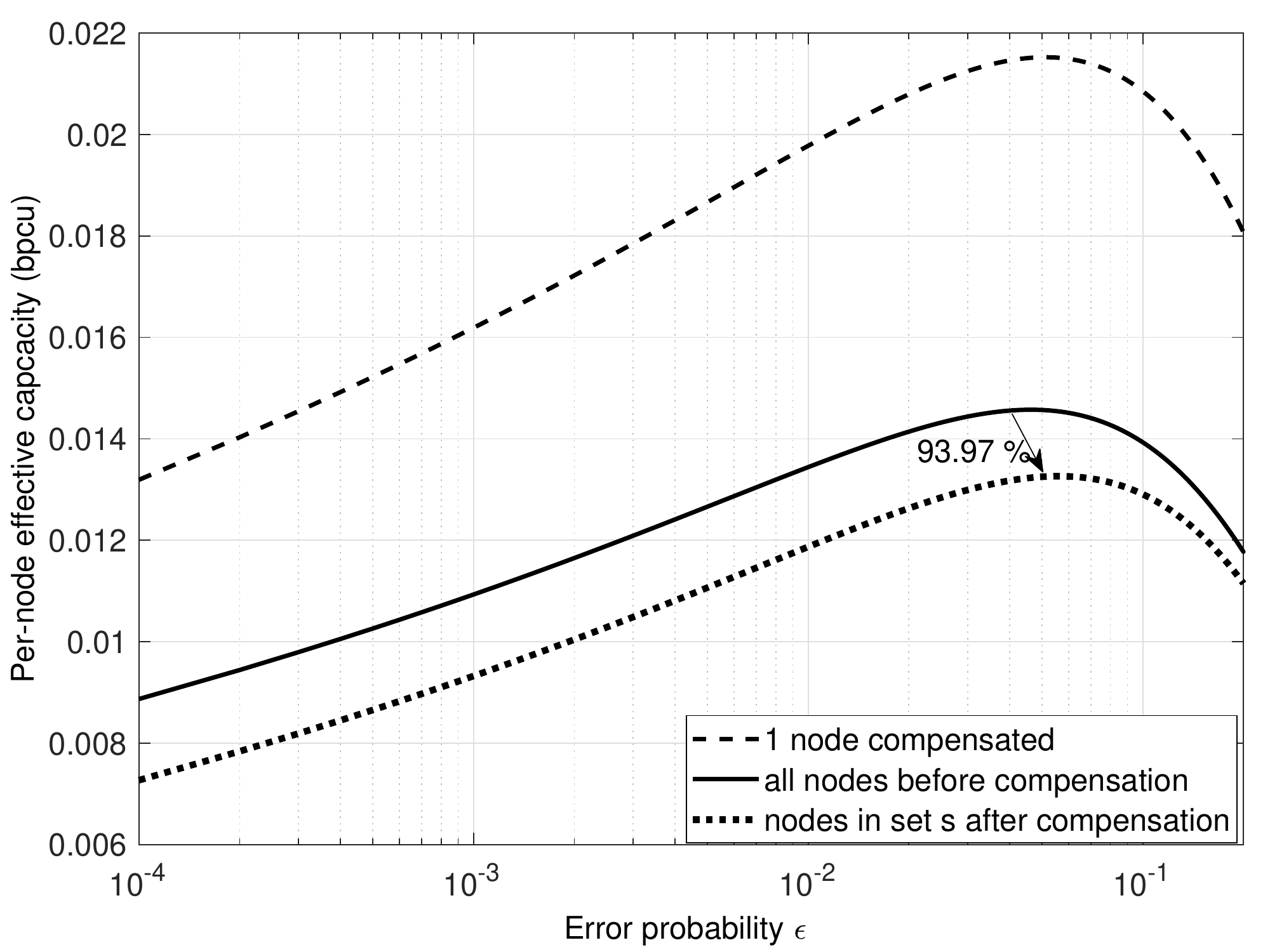}
	\centering
	\vspace{-0mm}
	\caption{Per-node effective capacity as a function of error probability $\epsilon$ before and after joint compensation for $T_f=1000, \theta_1=0.1, \rho=2$, and $N=15$}
	\label{another}
	\vspace{-0mm}
\end{figure}

 \begin{figure}[ht]
	\begin{center}
		\includegraphics*[width=1\textwidth]{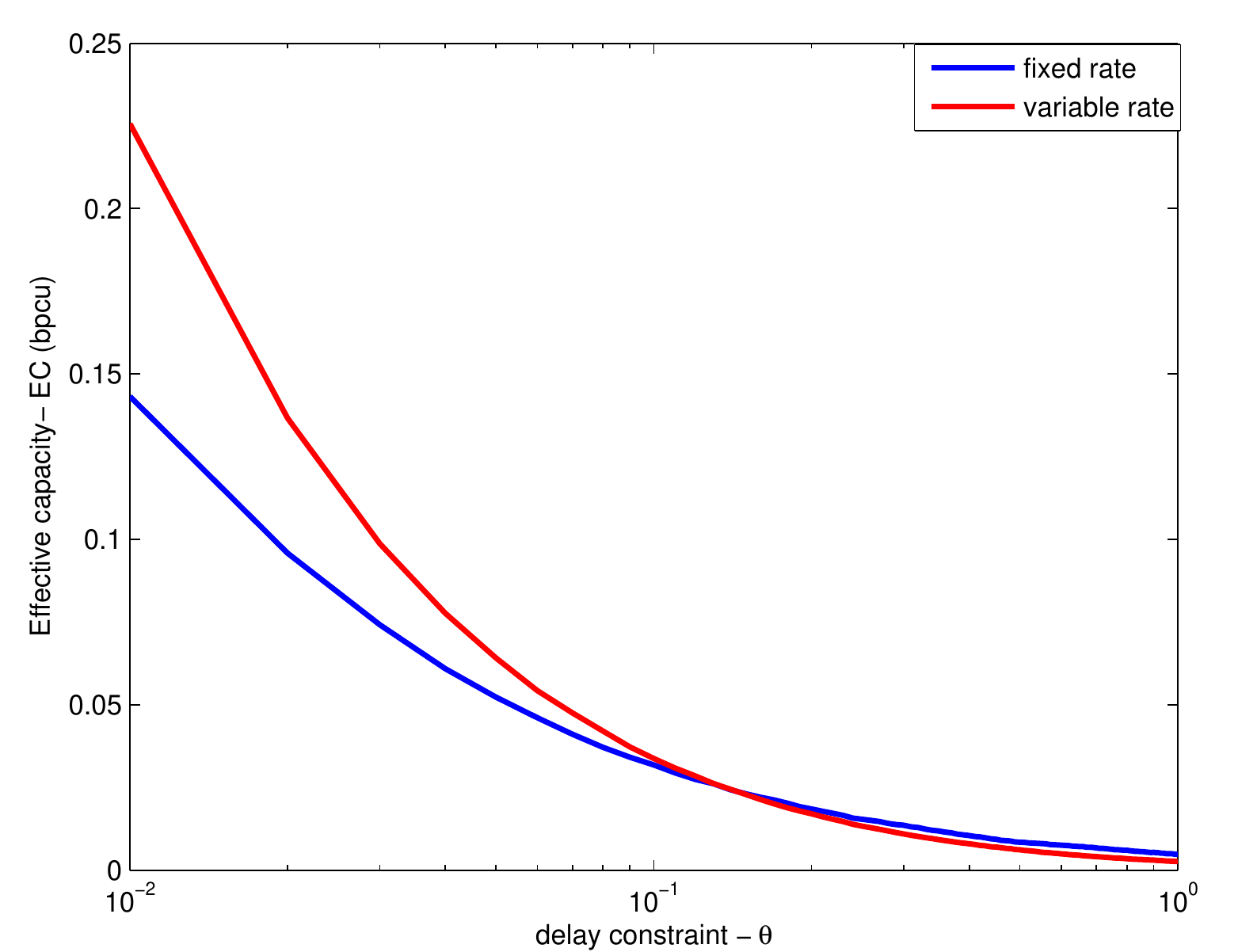}
	\end{center}
	\caption{Comaprison between fixed rate transmission and variable rate for different values of delay constraint $\theta$ where $\rho=1$, and $T_f=1000$.}
	\label{fixed rate}
\end{figure}

\section{Conclusions} \label{conclusion}

In this work, we presented a detailed analysis of the EC for delay constrained MTC networks in the finite blocklength regime. For quasi-static Rayleigh fading channels, we proposed an approximation for the EC and defined the optimum error probability. We characterized the optimization problem to maximize EC with error constraint which showed that there is a relatively small sacrifice in EC for high SINR. Our analysis indicated that SINR variations have minimum effect on EC under strict delay constraints. In a dense MTC network scenario, we illustrated the effect of interference on EC. We proposed power control as an adequate method to restore the EC in networks with less stringent delay constraints. Another method is graceful degradation of delay constraint, where we showed that a very limited extension in the delay bound could successfully recover the EC. Joint compensation emerges as a combination between these two methods, where an operational point is selected to maximize an objective function according to the networks design aspects. Finally, we concluded that for high values of $\theta$, fixed rate transmission performs strictly better. As future work, we aim to analyze the impact of imperfect CSI on the EC and coordination algorithms that maximize EC with fairness constraints.

\appendix

\section{PROOF OF LEMMA 1}
	For Rayleigh envelope of pdf given in (\ref{pdf}), the EC expression in (\ref{EC}) can be written as
\begin{align}\label{EC2}
\begin{split}
\psi(\rho_i,\theta,\epsilon)= \int_{0}^{\infty}
\left( \epsilon+(1-\epsilon)e^{-\theta T_f r}\right)  e^{-z} dz.  
\end{split}
\end{align}
From (\ref{eq3}), we have		
\begin{align}\label{e1}
\exp\left( -\theta T_f r\right) =&\exp\left(-\theta T_f \log_2(1+\rho_i z)\right) \times \notag \\ &\exp\left(\theta \sqrt{T_f(1-\frac{1}{(1+\rho_i z)^{2}})} Q^{-1}(\epsilon)\log_2e\right) .
\end{align}	
Elaborating, we attain
\begin{flalign}\label{e2}
\exp\left(-\theta T_f \log_2(1+\rho_i z)\right) %&=e^{ \frac{-\theta T_f \ln(1+\rho_i z)}{\ln 2}} \notag \\
%&=(1+\rho_i z)^{\frac{-\theta T_f}{\ln 2}} \notag \\
&=(1+\rho_i z)^{d},
\end{flalign}
\begin{align}\label{e3}
\exp\left(\theta \sqrt{T_f(1-\frac{1}{(1+\rho_i z)^{2}})} Q^{-1}(\epsilon)\log_2e\right) =\exp\left(cx\right) .  
\end{align}
We resort to the second order Taylor expansion to obtain $e^{c x} = 1+c x+\frac{(c x)^2}{2}$ and place it into (\ref{e1}), then \eqref{EC2} becomes
\begin{align}\label{general2}
\begin{split}
\psi(\rho,\theta,\epsilon)&= \epsilon\int_{0}^{\infty}
e^{-z} dz+ (1-\epsilon)\left(  \int_{0}^{\infty}
(1+\rho z)^{d}e^{-z} dz  \right. \\
& +c\int_{0}^{\infty}
(1+\rho z)^{d} x e^{-z} dz +\left. \frac{c^2}{2}\int_{0}^{\infty}
(1+\rho z)^{d} x^2 e^{-z} dz \right)   
\end{split}
\end{align} 
The first integral reduces to unity. Applying Laurent's expansion for $x$ [33]\cite{Complex_Analysis}, we attain.
\begin{align}\label{x}
x\approx1-\tfrac{1}{2\left(1+\rho z \right)^{2}}.    
\end{align}	
Replacing (\ref{x}) into (\ref{general2}) yields
\begin{align}\label{general3}
\begin{split}
\psi(\rho,\theta,\epsilon)&= \epsilon+ (1-\epsilon)\left[\vphantom{\int}  e^{\frac{1}{\rho}} \rho^d\Gamma\left(d+1,\frac{1}{\rho}\right)  \right. \\
& +c\left(e^{\frac{1}{\rho}} \rho^d\Gamma\left(d+1,\frac{1}{\rho}\right) -\frac{1}{2} e^{\frac{1}{\rho}} \rho^{d-2}\Gamma\left(d-2,\frac{1}{\rho}\right)\right) \\
&+\left. \frac{c^2}{2}\left(e^{\frac{1}{\rho}} \rho^d\Gamma\left(d+1,\frac{1}{\rho}\right) - e^{\frac{1}{\rho}} \rho^{d-2}\Gamma\left(d-2,\frac{1}{\rho}\right)\right)\right].   
\end{split}
\end{align}
After manipulating (\ref{general3}) and inserting it into (\ref{EC}), we obtain (\ref{Rayleigh}). 
\section{PROOF OF PROPOSITION 1}
	Differentiating (\ref{EC}) with respect to $\rho_i$
\begin{align*}\label{}
\begin{split} 
\frac {\partial EC}{\partial \rho_i}=\frac {\partial EC}{\partial r}\frac {\partial r}{\partial \rho_i}=\frac{e^{-T_f \theta r}}{\epsilon+(1-\epsilon)e^{-T_f\theta r}} \mathcal{K},
\end{split}
\end{align*}
where $\mathcal{K}= \frac {\partial r}{\partial \rho_i} (1-\epsilon)$ is strictly positive since the achievable rate $r$ is an increasing function of the SINR $\rho_i$. Differentiating once more with respect to $\theta$
\begin{align}\label{}
\begin{split} 
\frac{\partial}{\partial \theta}\left( \frac {\partial EC}{\partial \rho_i}\right) = -\frac{\mathcal{K}T_f r e^{-T_f \theta r}}{\left(\epsilon+(1-\epsilon)e^{-T_f\theta r} \right)^2 },
\end{split}
\end{align}
which is strictly negative and thus, validating our proposition.

%%%%%%%%%%%%%%%%%%%%%%%%%%%%%%%%%%%%%%%%%%%%%%
%%                                          %%
%% Backmatter begins here                   %%
%%                                          %%
%%%%%%%%%%%%%%%%%%%%%%%%%%%%%%%%%%%%%%%%%%%%%%

\section*{Declarations}
\begin{backmatter}
	
\section*{Availability of data and material}
The paper is self-contained, since we provide a mathematical framework which can be reproduced with the details provided in Sections 2 to 5, and in Section 5 numerical results and parameter settings are described in details. 
\section*{Competing interests}
The authors declare that they have no competing interests.
\section*{Funding}
This work is partially supported by Aka Project SAFE (Grant no. 303532), and by Finnish Funding Agency for Technology and Innovation (Tekes), Bittium Wireless, Keysight Technologies Finland, Kyynel, MediaTek Wireless, and Nokia Solutions and Networks.

\section*{Authors' contributions}
MS derived the equations and performed the system simulations. ED revised the equations and contributed to writing the introduction, system model and conclusion. HA supervised and reviewed the paper, while ML directed and supervised the research. All authors participated in this work, and approved the final manuscript.

\section*{Acknowledgements}
Not applicable.

%%%%%%%%%%%%%%%%%%%%%%%%%%%%%%%%%%%%%%%%%%%%%%%%%%%%%%%%%%%%%
%%                  The Bibliography                       %%
%%                                                         %%
%%  Bmc_mathpys.bst  will be used to                       %%
%%  create a .BBL file for submission.                     %%
%%  After submission of the .TEX file,                     %%
%%  you will be prompted to submit your .BBL file.         %%
%%                                                         %%
%%                                                         %%
%%  Note that the displayed Bibliography will not          %%
%%  necessarily be rendered by Latex exactly as specified  %%
%%  in the online Instructions for Authors.                %%
%%                                                         %%
%%%%%%%%%%%%%%%%%%%%%%%%%%%%%%%%%%%%%%%%%%%%%%%%%%%%%%%%%%%%%

% if your bibliography is in bibtex format, use those commands:

%\bibliographystyle{bmc-mathphys} % Style BST file (bmc-mathphys, vancouver, spbasic).
%\bibliography{bmc}      % Bibliography file (usually '*.bib' )
% for author-year bibliography (bmc-mathphys or spbasic)
% a) write to bib file (bmc-mathphys only)
% @settings{label, options="nameyear"}
% b) uncomment next line
%\nocite{label}

% or include bibliography directly:
% \begin{thebibliography}
% \bibitem{b1}
% \end{thebibliography}

%% BioMed_Central_Bib_Style_v1.01
\bibliographystyle{bmc-mathphys} % Style BST file (bmc-mathphys, vancouver, spbasic).
\bibliography{bmc_article}

%%%%%%%%%%%%%%%%%%%%%%%%%%%%%%%%%%%
%%                               %%
%% Figures                       %%
%%                               %%
%% NB: this is for captions and  %%
%% Titles. All graphics must be  %%
%% submitted separately and NOT  %%
%% included in the Tex document  %%
%%                               %%
%%%%%%%%%%%%%%%%%%%%%%%%%%%%%%%%%%%

%%
%% Do not use \listoffigures as most will included as separate files

%%%%%%%%%%%%%%%%%%%%%%%%%%%%%%%%%%%
%%                               %%
%% Tables                        %%
%%                               %%
%%%%%%%%%%%%%%%%%%%%%%%%%%%%%%%%%%%

%% Use of \listoftables is discouraged.
%%

\end{backmatter}
\end{document}